\newtheorem{definition}{Definition}
\newtheorem{remark}{Remark}
\newtheorem{example}{Example}
\newtheorem{proposition}{Proposition}
\newtheorem{theorem}{Theorem}
\newtheorem{lemma}{Lemma}
\title{Transformation of Turing Machines into\\ Context-Dependent Fusion Grammars}
\author{Aaron Lye
\institute{University of Bremen, Department of Computer Science and Mathematics\\ 
P.O.Box 33 04 40, 28334 Bremen, Germany}
\email{lye@math.uni-bremen.de}
}
\newcommand{\dder}{\mathop{\Longrightarrow}\limits}
\newcommand{\N}{\mathbb N}
\newcommand{\TM}{\mathit{TM}}
\newcommand{\fuscomp}{\overline}
\newcommand{\fr}{\mathit{fr}}
\newcommand{\cdfr}{\mathit{cdfr}}
\newcommand{\CDFG}{\mathit{CDFG}}
\newcommand{\att}{\mathit{att}}
\newcommand{\lE}{\mathit{lab}}
\newcommand{\sE}{s}
\newcommand{\tE}{t}
\newcommand{\HSigma}{\mathcal{H}_{\Sigma}}
\newcommand{\HT}{\mathcal{H}_{T}}
\newcommand{\Ht}[1]{\mathcal{H}_{#1}}
\newcommand{\HTM}{\mathcal{H}_{T \cup M}}
\newcommand{\Htm}[2]{\mathcal{H}_{#1 + #2}}
\newcommand{\calC}{\mathcal{C}}
\newcommand{\tmtapestart}{\vartriangleright}
\newcommand{\tmtapeend}{\vartriangleleft}
\newcommand{\tmblank}{\square}
\newcommand{\tmstep}[1]{\vdash_{#1}}
\newcommand{\sg}{\mathit{sg}}
\newcommand{\tg}{\mathit{tg}}
\newcommand{\encsg}[1]{_{\tmtapestart}\mathit{sg}#1_{\tmtapeend}}
\newcommand{\fsym}[1]{#1_f}
\begin{document}
\maketitle

\begin{abstract}
Context-dependent fusion grammars were recently introduced as devices for the generation of hypergraph languages.
In this paper, we show that this new type of hypergraph grammars, where the application of fusion rules is restricted by positive and negative context conditions, is a universal computation model.
Our main result is that Turing machines can be transformed into these grammars such that the recognized language of the Turing machine and the generated language of the corresponding context-dependent fusion grammar coincide up to representation of strings as graphs. 
As a corollary we get that context-dependent fusion grammars can generate all recursively enumerable string languages.

\end{abstract}

\section{Introduction}
\label{sec:introduction}
In 2017 we introduced fusion grammars as generative devices on hypergraphs~\cite{Kreowski-Kuske-Lye:17a}.
They are motivated by the observation, that one encounters various fusion processes in various scientific fields like DNA computing, chemistry, tiling, fractal geometry, visual modeling and others.
The common principle is, that a few small entities may be copied and fused to produce more complicated entities.
%For example, the fusion of DNA double strands according to the Watson-Crick complementarity is a key operation of DNA computing (see, e.g.~\cite{Paun-Rozenberg-Salomaa:98,Adleman:94}).
%Similar effects can be seen in the iteration of some fractals (see, e.g., \cite{PeitgenJuergensSaupe:04}), can be found in mosaics and tilings (see, e.g.,~\cite{Gruenbaum-Shephard:87}), or in the area of visual modeling where a spectrum of diagrams is composed of some basic forms.
However, it seems that the generative power of fusion grammars (without context-conditions or regulations) is limited (cf.~\cite{Kreowski-Kuske-Lye:17a,Lye:18}). %\footnote{We conjecture that fusion grammars cannot generate e.g. the language of complete graphs or -- as mentioned in~\cite{Kreowski-Kuske-Lye:17a} -- the language of regular Sierpinski triangles.}
Furthermore, there are numerous examples of fusion processes restricted to certain conditions, e.g. the presence of enzymes accelerating chemical reactions.
In~\cite{Kreowski-Kuske-Lye:19a} we introduced context-dependent fusion grammars as a generalization of fusion grammars to simulate Petri nets.
It turns out, that context-dependent fusion grammars are powerful enough to simulate Turing machines.
We construct a transformation of Turing machines into
context-dependent fusion grammars in such a way that the recognized language of the Turing machine and the language generated by the corresponding grammar coincide up to representation of strings as graphs.\footnote{Instead of Turing machines some other equivalent computational formalism could be chosen, e.g. Petri nets with inhibitor arcs which would be an extension of the transformation presented in~\cite{Kreowski-Kuske-Lye:19a}. In the considered approaches the transformations were technical and of similar complexity.
Turing machines have the advantage of being an established and well known computational model.}
As a corollary we get that context-dependent fusion grammars can generate all recursively enumerable string languages (up to representation) and that they are universal in this respect.

%\paragraph{Related work:}
Relating computational models to Turing machines is an old and established approach which can be found in most foundations textbooks in theoretical computer science.
Moreover, it is well known that graph transformation in general is Turing-complete. In 1978 Uesu presented a system of graph grammars that generates all recursively enumerable sets of labeled graphs (cf.~\cite{Uesu:78}).
In~\cite{Ehrig-Habel-Kreowski:92} Ehrig et.~al. presented a transformation of Chomsky grammars in graph grammars (cf. also \cite{Kreowski:93b,Kreowski-Klempien-Hinrichs-Kuske:2005} for similar results).
Furthermore, asking ``what programming constructs are needed on top of graph transformation rules to obtain a computationally complete language''~\cite{Habel-Plump:01} is not a new question. In~\cite{Habel-Plump:01} Habel and Plump presented a graph program that simulates a Turing machine.
Due to the novelty of (variants of) fusion grammars, there are many open questions.
Enhancing fusion grammars by the inversion of fusions led to the introduction of splitting/fusion grammars in~\cite{Kreowski-Kuske-Lye:18b}.
It is shown that splitting/fusion grammars  can simulate Chomsky grammars and connective hypergraph grammars.

Our construction differs significantly from those cited above due to the semantics of context-dependent fusion grammars.
A context-dependent fusion grammar provides a start hypergraph and a finite set of fusion labels (besides some markers and terminals).
The fusion labels have complements and serve as rules.
A context-dependent fusion is defined by choosing two complementarily labeled hyperedges provided that certain positive and negative context conditions are satisfied, removing them and merging the corresponding attachment vertices.
Given a hypergraph, the set of all possible fusions is finite as fusions never create anything.
To overcome this limitation, we allow arbitrary multiplications of connected components, i.e., connected subhypergraphs of maximal size, within derivations in addition to fusion.
All modifications must be expressed in this way.

%\paragraph{Structure:}
The paper is organized as follows.
In Section~\ref{sec:preliminaries}, basic notions and
notations of hypergraphs are recalled.
Section~\ref{sec:turing-machines} and~\ref{sec:fusion-with-context} recall the notions of Turing machines and context-dependent fusion grammars, respectively.
Section~\ref{sec:trans-tm-to-cdfg} presents the reduction of Turing machines to
context-dependent fusion grammars.
Section~\ref{sec:conclusion} concludes the paper pointing out some open problems.

\section{Preliminaries}
\label{sec:preliminaries}
%In this section, the basic notions and notations of hypergraphs are recalled as far as needed.

We consider hypergraphs the hyperedges of which have multiple sources and multiple targets.
A \emph{hypergraph} over a given label alphabet $\Sigma$ is a system $H = (V,E,\sE,\tE,\lE)$ where $V$ is a finite set of \emph{vertices}, $E$ is a finite set of \emph{hyperedges},
$\sE,\tE\colon E \to V^*$ are two functions assigning to each hyperedge a sequence of \emph{sources} and \emph{targets}, respectively,
and $\lE\colon E \to \Sigma$ is a function, called \emph{labeling}.
The components of $H = (V,E,\sE,\tE,\lE)$ may also be denoted by $V_H$, $E_H$, $\sE_H$, $\tE_H$, and $\lE_H$ respectively. The class of all hypergraphs over $\Sigma$ is denoted by $\HSigma$.

Let $pr\colon V^* \times \N \to V$ be defined as $pr(v_1v_2\ldots v_n, i) = v_i$ if $1 \le i \le n$, where $n$ is the length of the sequence. It is undefined otherwise.

Let $H \in \HSigma$, and let $\equiv$ be an equivalence relation on $V_H$.
Then the \emph{fusion of the vertices in $H$ with respect to $\equiv$} yields the hypergraph
$H/{\equiv} = ( V_H/{\equiv}, E_H, s_{H/{\equiv}}, t_{H/{\equiv}}, \lE_H)$
with the set of equivalence classes $V_H/{\equiv} = \{ [v] \mid v \in V_H \}$ and 
$s_{H/{\equiv}}(e) = [v_1]\cdots [v_{k_1}]$, $t_{H/{\equiv}}(e) = [w_{1}]\cdots [w_{k_2}]$ for each $e \in E_H$ with $s_H(e) = v_1\cdots v_{k_1}$, $t_H(e) = w_{1}\cdots w_{k_2}$.
We often use the notation of specifying only the equivalent vertices.

Given $H,H' \in \HSigma$, a \emph{hypergraph morphism} $g \colon H \to H'$ consists of two mappings $g_V\colon V_H \to V_{H'}$ and $g_E\colon E_H\to E_{H'}$ such that $s_{H'}(g_E(e)) = g^*_V(s_H(e))$, $t_{H'}(g_E(e)) = g^*_V(t_H(e))$ and $\lE_{H'}(g_E(e)) = \lE_H(e)$ for all $e \in E_H$, where $g_V^*\colon V_{H}^* \to V_{H'}^*$ is the canonical extension of $g_V$, given by $g^*_V(v_1\cdots v_n) = g_V(v_1)$  $\cdots g_V(v_n)$ for all $v_1\cdots v_n \in V_H^*$.
%$H$ and $H'$ are \emph{isomorphic}, denoted by $H\iso H'$, if there is an isomorphism $g\colon H \to H'$, i.e., $g_V^*$ and $g_E^*$ are bijective.

Given $H,H' \in \HSigma$, $H$ is a \emph{subhypergraph} of $H'$,
denoted by $H \subseteq H'$,
if $V_H \subseteq V_{H'}$, $E_H \subseteq E_{H'}$, $s_H(e) = s_{H'}(e)$, $t_H(e) = t_{H'}(e)$, and $\lE_H(e) = \lE_{H'}(e)$ for all $e \in E_H$.
$H \subseteq H'$ implies that the two inclusions $V_H \subseteq V_{H'}$ and $E_{H} \subseteq E_{H'}$ form a hypergraph morphism from $H \to H'$.
%Given a morphism $g\colon H \to H'$, the image of $H$ in $H'$ under $g$ is a subhypergraph $g(H)\subseteq H'$.

Let $H' \in \HSigma$ as well as $V\subseteq V_{H'}$ and $E\subseteq E_{H'}$. Then the \emph{removal} of $(V,E)$ from $H'$ given by
$H = H' -(V,E) = (V_{H'} - V, E_{H'} - E, s_H, t_H, \lE_H)$
with $s_H(e) = s_{H'}(e)$, $t_H(e) = t_{H'}(e)$ and $\lE_{H}(e) = \lE_{H'}(e)$ for all $e \in E_{H'} - E$
defines a subgraph $H \subseteq H'$ if $s_{H'}(e), t_{H'}(e) \in (V_{H'}-V)^*$ for all $e \in E_{H'} - E$.
Let $H \in \HSigma$, $H' \subseteq H$.
Then $H - H' = H - (V_{H'}, E_{H'})$.

Let $H\in \HSigma$ and $H' = (V',E',s',t'\colon E' \to (V_H + V')^*, \lE'\colon E'\to \Sigma)$ be some quintuple with two sets $V'$, $E'$ and three mappings $s',t'$ and $\lE'$ where $+$ denotes the disjoint union of sets.
Then the \emph{extension} of $H$ by $H'$ given by
$H'' = (V_H + V', E_H + E', s,t, \lE)$
with $s(e) = s_H(e)$, $t(e) = t_H(e)$ and $\lE(e) = \lE_H(e)$ for all $e \in E_H$ as well as $s(e) = s'(e)$, $t(e) = t'(e)$ and $\lE(e) = \lE'(e)$ for all $e \in E'$ is a hypergraph with $H \subseteq H''$.

Let $H \in \HSigma$ and let $\att(e)$ be the set of source and target vertices for $e \in E_H$.
$H$ is \emph{connected} if for each $v,v' \in V_H$, there exists a sequence of triples
$(v_1, e_1, w_1)\ldots(v_n,e_n,w_n) \in (V_H \times E_H \times V_H)^*$ 
such that $v = v_1, v' = w_n$ and $v_i,w_i \in \att(e_i)$ for $i=1,\ldots,n$ and
$w_i = v_{i+1}$ for $i=1,\ldots,n-1$.
A subgraph $C$ of $H$, denoted by $C \subseteq H$, is a \emph{connected component} of $H$ if it is connected and there is no larger connected subgraph, i.e., $C \subseteq C' \subseteq H$ and $C'$ connected implies $C = C'$.
The set of connected components of $H$ is denoted by $\calC(H)$.

Given $H,H' \in \HSigma$, the \emph{disjoint union} of $H$ and $H'$ is denoted by $H+H'$.
Further, $k \cdot H$ denotes the disjoint union of $H$ with itself $k$ times.
We use the \emph{multiplication} of $H$ defined by means of $\calC(H)$ as follows. Let $m \colon \calC(H) \to \N$ be a mapping, called \emph{multiplicity}, then $m\cdot H = \sum_{C \in \calC(H)} m(C) \cdot C$.

A string can be represented by a simple path where the sequence of labels along the path equals the given string.
Let $w= x_1\dots x_n \in \Sigma^*$ for $n\ge 1$ and $x_i \in \Sigma$ for $i=1,\ldots,n$.
Let $[n] = \{1,\ldots,n\}$.
Then the \emph{string graph} of $w$ is defined by $\sg(w) = (\{0\} \cup [n], [n], \sE_w, \tE_w, \lE_w)$
with $\sE_w(i) = (i-1), \tE_w(i) = i$ and $\lE(i) = x_i$ for $i=1,\ldots,n$.
The string graph of the empty string $\varepsilon$, denoted by $\sg(\varepsilon)$, is the discrete graph with a single vertex $0$.
Obviously, there is a one-to-one correspondence between $\Sigma^*$ and $\sg(\Sigma^*) = \{ \sg(w) \mid w \in \Sigma^* \}$.
We define a mapping $begin$ assigning to every string graph to its vertex~$0$.

\section{Turing Machines}
\label{sec:turing-machines}
In this section, we shortly recall the notion of Turing machines (see, e.g., \cite{Turing:36,Hromkovic:04,Hopcroft-Motwani-Ullman:03}) and their recognized languages.
We consider Turing machines with a designated start and accept state and one two-sided infinitely extendable (working) tape. We use two delimiters $\tmtapestart$ and $\tmtapeend$ to indicate the end of the tape to the left and to the right, respectively.
If the head moves beyond a delimiter a new cell labeled $\tmblank$ (the blank symbol) is added.

\begin{definition}
\label{def:TM}
  \begin{enumerate}
  \item A \emph{Turing machine} is a system $\TM = (Q,\Omega, \Gamma, \Delta)$, where
        $Q$ is a finite set of states with two designated different states $q_{start}$ and $q_{accept}$,
        $\Omega$ is the input alphabet,
        $\Gamma$ is the tape alphabet with $\Omega \subseteq \Gamma$ and $\tmblank \in \Gamma \setminus \Omega$,
        and
        $\Delta \subseteq (Q\setminus \{q_{accept}\}) \times \Gamma \times \Gamma \times \{l,n,r\} \times Q$ is the transition relation.

  \item
    $\mathit{conf}(\TM) =  Q \times \Gamma^* \times  \Gamma^*$
    is the set of \emph{configurations}.

  \item A \emph{step} of $\TM$ is defined by the relation $\tmstep{\TM} \subseteq \mathit{conf}(\TM) \times \mathit{conf}(\TM)$:
  \begin{align*}
  (p, \alpha u, x \beta) & \tmstep{\TM} (q, \alpha, u y \beta) & \text{if }& (p,x,y,l,q) \in \Delta\\
  (p, \varepsilon, x \beta) & \tmstep{\TM} (q, \varepsilon, \tmblank y \beta) & \text{if }& (p,x,y,l,q) \in \Delta\\
  (p, \alpha, \varepsilon) & \tmstep{\TM} (q, \alpha, y) & \text{if }& (p,\tmblank,y,l,q) \in \Delta\\
  (p, \alpha, x \beta) & \tmstep{\TM} (q, \alpha, y \beta) & \text{if }& (p,x,y,n,q) \in \Delta\\
  (p, \alpha, x \beta) & \tmstep{\TM} (q, \alpha y, \beta) & \text{if }& (p,x,y,r,q) \in \Delta\\
  (p, \alpha, \varepsilon) & \tmstep{\TM} (q, \alpha y, \varepsilon) & \text{if }& (p,\tmblank,y,r,q) \in \Delta 
    \end{align*}
where $\alpha, \beta \in \Gamma^*, u \in \Gamma$ and $\varepsilon$ is the empty string.

  \item
  A \emph{computation} of $\TM$ is a potentially infinite sequence of configurations $c_0,c_1,\ldots$ where $c_0 = (q_{start} \times \varepsilon \times w)$ is the start configuration wrt the input $w \in \Omega^*$, and $c_i \tmstep{\TM} c_{i+1}$ for all $i \in \N$.
    
  \item The recognized language of $\TM$ is defined as
    $L(\TM) = \{ w  \in \Omega^* \mid (q_{start}, \varepsilon, w) \tmstep{\TM}^* (q_{accept},\alpha,\beta) \},$
    where $\alpha,\beta \in \Gamma^*$ are arbitrary. 
  \end{enumerate}
\end{definition}

\begin{remark}
\begin{enumerate}
\item
$(p,x,y,dir,q) \in \Delta$ means if the Turing machine is in state $p$ and reads the symbol $x$, it can replace $x$ by $y$ and move the (read/write) head to the left if $dir=l$, to the right if $dir=r$ or leave the head stationary if $dir=n$. Afterwards the machine is in state~$q$.

\item A configuration is of the form $(q, \alpha, \beta)$ which means the machine is in state $q$ and the contents of the tape to the left and right of the head are $\alpha$ and $\beta$, respectively. The machine reads the first symbol of $\beta$ if $\beta \ne \varepsilon$ and $\tmblank$ otherwise.

\item
A computation is finite if a halting configuration is reached, i.e., if there is no possibility of continuing the computation.
If the machine enters the state $q_{accept}$, then it accepts the input.

\item The recognized language consists of all strings for which a computation exists such that the machine enters the accepting state $q_{accept}$.

\end{enumerate}
\end{remark}

\section{Context-Dependent Fusion Grammars}
\label{sec:fusion-with-context}
In this section, we recall the notion of context-dependent fusion grammars (cf.~\cite{Kreowski-Kuske-Lye:19a}).
Context-dependent fusion grammars generate hypergraph languages from start hypergraphs via successive applications of context-dependent fusion rules, multiplications of connected components, and a filtering mechanism.
A fusion rule is defined by two complementary-labeled hyperedges and positive and negative context-conditions.
Such a rule is applicable if both the positive and negative context-conditions of the rule are satisfied. 
Its application consumes the two hyperedges and fuses the sources of the one hyperedge with the sources of the other as well as the targets of the one with the targets of the other.

\begin{definition}
\begin{enumerate}
\item 
$F \subseteq \Sigma$ is a \emph{fusion alphabet}
if it is accompanied by a \emph{complementary fusion alphabet}
$\fuscomp{F} = \{ \fuscomp{A} \mid A \in F \} \subseteq \Sigma$,
where $F \cap \fuscomp{F} = \emptyset$ and
$\fuscomp{A} \ne \fuscomp{B}$ for $A,B\in F$ with $A \ne B$
and a \emph{type function}
$type\colon F \cup \fuscomp{F} \rightarrow (\mathbb{N} \times \mathbb{N})$
with
$type(A) = type(\fuscomp{A})$ for each $A \in F$.

\item
For each $A \in F$ with $type(A) = (k_1,k_2)$,  the \emph{fusion rule} $\fr(A)$ is the hypergraph,
depicted in Figure~\ref{fig:fusion rule},
with 
 $V_{\fr(A)} = \{v_i, v'_i \mid i = 1, \ldots , k_1 \}\cup \{w_j, w'_j\mid  j = 1, \ldots ,  k_2\}$,
 $E_{\fr(A)} = \{e,\fuscomp{e}\}$,
 $s_{\fr(A)} (e) = v_1\cdots v_{k_1}$, $s_{\fr(A)} (\fuscomp{e}) = v'_1\cdots v'_{k_1}$,
 $t_{\fr(A)} (e) = w_1\cdots w_{k_2}$, $t_{\fr(A)} (\fuscomp{e}) = w'_1\cdots w'_{k_2}$, and
 $\lE_{\fr(A)}(e) = A$ and $\lE_{\fr(A)}(\fuscomp{e}) = \fuscomp A$.

\item 
The application of $\fr(A)$ to a hypergraph $H \in \HSigma$ proceeds according to the following steps:
(1)~Choose a  \emph{matching morphism} $g \colon \fr(A) \to H$.
(2)~Remove the images of the two hyperedges of $\fr(A)$ yielding $X = H - (\emptyset, \{g(e),g(\fuscomp{e})\})$.
(3)~Fuse the corresponding source and target vertices of the removed hyperedges yielding the hypergraph $ H' = X/{\equiv}$ where $\equiv$ is generated by the
relation $\{(g(v_i),g(v'_i))\mid i = 1,\ldots, k_1\}\cup \{(g(w_j),g(w'_j))\mid j = 1,\ldots,  k_2\}$.
The application of $\fr(A) $ to $H$ is denoted by $H\dder_{\fr(A)} H'$ and called a \emph{direct derivation}.
%\end{enumerate}
%\end{definition}

\begin{figure}[tb]
\centering
\begin{tikzpicture}
  \node [circle,draw=black] (in1a) at (-0.5,1) {} node at (-0.5,1.4) {$v_{k_1}$};
  \node (dots1) at (-1,1) {$\dots$};
  \node [circle,draw=black] (in2a) at (-1.5,1) {} node at (-1.5,1.4) {$v_1$};
  \node [circle,draw=black] (in2b) at (0.5,1) {} node at (0.5,1.4) {$v_{1}'$};
  \node (dots1) at (1,1) {$\dots$};
  \node [circle,draw=black] (in1b) at (1.5,1) {} node at (1.5,1.4) {$v_{k_1}'$};
  
  \node [draw=black] (Aa) at (-1,0) {$A$};
  \node [draw=black] (Ab) at (1,0) {$\fuscomp{A}$};

  \node [circle,draw=black] (out1a) at (-0.5,-1) {} node at (-.5,-1.4) {$w_{k_2}$};
  \node (dots1) at (-1,-1) {$\dots$};
  \node [circle,draw=black] (out2a) at (-1.5,-1) {} node at (-1.5,-1.4) {$w_{1}$};
  \node [circle,draw=black] (out2b) at (0.5, -1) {} node at (.5,-1.4) {$w_{1}'$};
  \node (dots1) at (1,-1) {$\dots$};
  \node [circle,draw=black] (out1b) at (1.5,-1) {} node at (1.5,-1.4) {$w_{k_2}'$};
  
  \foreach \x in {a,b}{
  \path
  (in1\x) edge [->] node[right] {$k_1$} (A\x)
  (in2\x) edge [->] node[left] {$1$} (A\x)
  (A\x) edge [->] node[right] {$k_2$} (out1\x)
  (A\x) edge [->] node[left] {$1$} (out2\x);
  }
\end{tikzpicture}    	
\caption{The fusion rule $\fr(A)$ with $type(A) = (k_1,k_2)$}
\label{fig:fusion rule}
\end{figure}
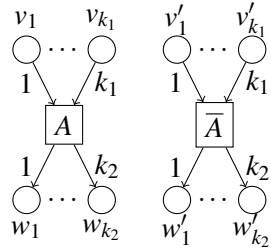

%\begin{definition}
%\begin{enumerate}
\item A \emph{context-dependent fusion rule} is a triple
$\cdfr = (\fr(A), PC,\allowbreak NC)$ for some $A \in F$ where
$PC$ and $NC$ are two finite sets of hypergraph morphisms with domain $\fr(A)$ mapping into finite contexts defining \emph{positive} and \emph{negative context conditions} respectively.

\item The rule $\cdfr$ is applicable to some hypergraph $H$ via a matching morphism $g\colon \fr(A) \to H$ if
for each $(c\colon \fr(A) \to C) \in PC$ there exists a hypergraph morphism
  $h\colon C \to H$ such that
  $h$ is injective on the set of hyperedges and
  $h \circ c = g$, and
for all $(c\colon \fr(A) \to C) \in NC$ there does not exist a hypergraph morphism
  $h\colon C \to H$ such that
  $h \circ c = g$.
\item If $\cdfr$ is applicable to $H$ via $g$, then the direct derivation $H \dder_{\cdfr} H'$ is the direct derivation $H \dder_{\fr(A)} H'$.
\end{enumerate}
\end{definition}

\begin{remark}
$\fr(A)$ and $(\fr(A),\emptyset,\emptyset)$ are equivalent.
We use the first as an abbreviation for the latter.
\end{remark}

Given a finite hypergraph, the set of all possible successive fusions is finite as fusion rules never create anything. 
To overcome this limitation, arbitrary multiplications of disjoint components within derivations are allowed.
The generated language consists of the terminal part of all resulting connected components that contain no fusion symbols and at least one marker symbol, where marker symbols are removed in the end.
These marker symbols allow us to distinguish between wanted and unwanted terminal components.

\begin{definition}
\begin{enumerate}
\item
A \emph{context-dependent fusion grammar} is a system $\CDFG = (Z,F,M,T,P)$
where $Z \in \mathcal{H}_{F \cup \fuscomp{F} \cup T \cup M}$ is a \emph{start hypergraph} consisting of a finite number of connected components, $F \subseteq \Sigma$ is a finite fusion alphabet, $M \subseteq \Sigma$ with $M \cap (F \cup \fuscomp{F}) = \emptyset$ is a finite set of \emph{markers}, $T \subseteq \Sigma$ with $T \cap (F \cup \fuscomp{F}) = \emptyset = T \cap M$ is a finite set of \emph{terminal labels}, and $P$ is a finite set of context-dependent fusion rules.

\item A \emph{direct derivation} $H \dder H'$ is either a context-dependent fusion rule application $H \dder_{\cdfr} H'$ for some $\cdfr \in P$
or a multiplication $H \dder_{m} m \cdot H$ for some multiplicity~$m\colon \calC(H)\to \N$. 
A \emph{derivation} $H \dder^{n} H'$ of length $n \ge 0$ is a sequence of direct derivations $H_0 \dder H_1 \dder \dots\allowbreak \dder H_n$ with $H=H_0$ and $H' = H_n$.
If the length does not matter, we may write $H \dder^{*} H'$.

\item $L(\CDFG) = \{ rem_M (Y) \mid  Z \dder^{*} H, Y \in \calC(H) \cap (\HTM \setminus \HT) \}$
is the \emph{generated language}
where $rem_M(Y)$ is the terminal hypergraph obtained by removing all hyperedges with labels in $M$ from~$Y$.
\end{enumerate}
\end{definition}

\section{Transformation of Turing Machines into Context-Dependent Fusion\\ Grammars}
\label{sec:trans-tm-to-cdfg}
In this section, we show that Turing machines can be simulated by context-dependent fusion grammars. The construction works roughly as follows:
(1)~A Turing machine is represented by the usual state graph,
(2)~the tape is represented by a sequence of successive edges each labeled with symbols of the working alphabet,
(3)~the state graph and the tape are connected by a hyperedge, called $head$, which also indicates the current state (specifically, the current state is the first source) and is attached to the current position on the tape,
(4)~in addition, the start hypergraph contains components that allow to generate the initial tape in a terminal and a fusion version,
(5)~components that allow to simulate a transition step of the Turing machine by a sequence of applications of context-dependent fusion rules,
and (6)~there is a terminating component that enables to disconnect the terminal tape with the input string from the rest of the working hypergraph whenever an accepting state is reached. In  other words, the grammar generates a tape with a detachable input string if and only if the Turing machine accepts this string.

Because the transformation is quite complicated we introduce the ideas step by step. 
First, we give a hypergraph representation of Turing machines.
Then we introduce the tape graph representing the working tape as well as the input to the Turing machine.
This leads us directly to hypergraph representations of configurations. 
Afterwards, we demonstrate how a step can be simulated by a sequence of context-dependent fusion rules.
Finally, the two constructions are combined and our main theorem is presented.

\subsection{Representation of a Turing machine by a hypergraph}
      
In the hypergraphical representation of a Turing machine, denoted by $hg(\TM)$, vertices represent states and edges between these vertices represent the elements of the transition relation.
Initially, there is one additional vertex and three special hyperedges:
an $acc$-loop indicates the accepting state,
a hyperedge with $|Q|$ sources and one target, called $head$, connects the state graph with
the additional vertex to which a $\fuscomp{tape}$-hyperedge is attached. The latter enables fusion of the Turing machine with the tape graph.

\begin{definition}\label{def:construction-hg_tm}
Let $TM = (Q,\Omega, \Gamma, \Delta)$ be a Turing machine.
Let $\sigma = q_1 \cdots q_{|Q|}$ be a sequence of states, where each state occurs exactly once.
Define
$hg(\TM,\sigma) = (Q + \{v_{head}\},\{acc,head,\fuscomp{tape}\} + \Delta,\sE,\tE,\lE)$,
  where
%  \begin{itemize}
  %  \item
  $\sE(acc) = \tE(acc) = q_{accept}$, $\lE(acc) = acc$,
  %  \item
  $\sE(head) = \sigma$, $\tE(head) = v_{head}$, $\lE(head) = head$,
  %\item
  $\sE(\fuscomp{tape}) = v_{head}$, $\tE(\fuscomp{tape}) = \varepsilon$, $\lE(\fuscomp{tape}) = \fuscomp{tape}$,
  %\item
  $\sE( \delta ) = p$, $\tE( \delta ) = q$, and $\lE( \delta ) = x/y/dir$,
    where $\delta = (p,x,y,dir,q) \in \Delta$.
  %\end{itemize}
\end{definition}

\begin{example}\label{example:cdfg-TM-example}
Consider the Turing machine in Figure~\ref{fig:cdfg-TM-example}.
The corresponding hypergraph is depicted in Figure~\ref{fig:cdfg-TM-example-corresponding-hypergraph}
where the $head$-hyperedge is dashed.
\end{example}

\begin{figure}[t]
\begin{subfigure}[b]{0.45\textwidth}
\begin{tikzpicture}[->,>=stealth',shorten >=1pt,auto,node distance=2.8cm,
                    semithick]
  \tikzstyle{every state}=[draw=black,text=black]

  \node at (0,0)(1) {};
  
  \node[state,accepting]   (acc)  [below = 2.05 of 1]      {$q_{accept}$}; 
  \node[state] [right of=acc] (start)                    {$q_{start}$};
  \node[state]         (q1) [right of=start] {$q_{aux}$};
  
  \path (start) edge[bend left=8]              node {$b/\tmblank/r$} (q1);
  \path (q1) edge[bend left=8]              node {$b/b/n$} (start);
  \path (start) edge              node {$a/c/r$} (acc);

\end{tikzpicture}
\caption{the usual state graph of a Turing machine}
\label{fig:cdfg-TM-example}
\end{subfigure}
\begin{subfigure}[b]{0.5\textwidth}
\begin{tikzpicture}[->,>=stealth',shorten >=1pt,auto,node distance=2.8cm,
                    semithick]
  \tikzstyle{every state}=[draw=black,text=black]

  \node [inner sep = 1.5pt,circle,fill=black] at (1,0)(1) {};
  
  \node [transition] at (2,0.5)  (t2)  {$~\fuscomp{tape}~$}
  edge [pre]                    (1);
 
  \node[state,accepting]   (acc)  [below = 1.4 of 1]      {$q_{accept}$};

  \path (acc) edge[loop left]  node[below = 0.1] {$acc$} (acc);

  \node[state] [right of=acc] (start)                    {$q_{start}$};
  \node[state]         (q1) [right of=start] {$q_{aux}$};

  \node [transition] [below = 0.3 of 1]  (head)  {$head$}
  edge [pre,dashed] node [right = 0.01, very near start] {$1$}                    (start)
  edge [pre,bend left=28,dashed] node [below = 0.01, very near start] {$2$}                  (q1)
  edge [pre,dashed] node [below = 0.01, very near start] {$3~~~$}                  (acc)
  edge [post,dashed]                    (1);

  \path (start) edge[bend left=8]              node {$b/\tmblank/r$} (q1);
  \path (q1) edge[bend left=8]              node {$b/b/n$} (start);
  \path (start) edge              node {$a/c/r$} (acc);

\end{tikzpicture}
\caption{the hypergraph representation of the Turing machine}
\label{fig:cdfg-TM-example-corresponding-hypergraph}
\end{subfigure}
  \caption{Example state graph and the corresponding hypergraph representation of a Turing machine}
  \label{Fig:cdfg-TM-example+input}
\end{figure}
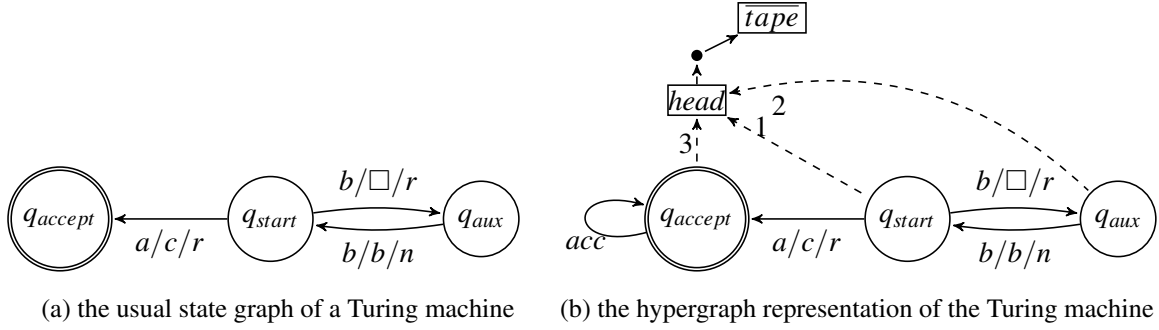

\begin{remark}
  The order in which the states of the Turing machine are connected to the sources of the $head$-hyperedge implements a permutation.
$\sigma$ is permuted when a transition step is simulated.
\end{remark}

\subsection{Tape graph}

In our construction the tape is represented by an infinitely extendable tape graph.
Due to technical reasons, the tape graph contains two connected string graphs, where one
is labeled over the terminal alphabet $\Omega$ and the other is labeled over the fusion alphabet $\fsym{\Gamma}$ (defined later).
The construction can be seen as having two tapes (input and working tape) initially with the same content, where the input tape is left invariant, but the working tape may be modified or extended by $\tmblank$-cells.
If the machine halts in the accepting state, then the content of the input tape is used as a contribution to the generated language.

In our construction five additional hyperedges are used.
The terminal-labeled string graph carries a marker hyperedge. The two corresponding string graphs are connected via a hyperedge labeled $cut$ which is used to disconnect the terminal- and marker-labeled string graph.
A $tape$-hyperedge is connected to the first vertex of the fusion-labeled string graph and is later used for attaching the tape graph to a hypergraph representation of a Turing machine.
Two hyperedges labeled $\tmtapestart$ and $\tmtapeend$ are used to extend the fusion-symbol labeled string graph with $\tmblank$-labeled hyperedges an unbounded number of times via the connected components
\begin{tikzpicture}[baseline=-0.1cm,node distance=1.3cm,>=stealth',bend angle=25,auto]

  \node [inner sep = 1.5pt,circle,fill=black] at (0,0)(0) {};
  \node [inner sep = 1.5pt,circle,fill=black] at (1.5,0)(1) {};

  \node [transition] at (0.75,0)  (blank)  {$\tmblank$}
  edge [pre]                     (0)
  edge [post]                     (1);

  \node [transition] at (-0.7,0) (start)  {$\tmtapestart$}
  edge [post]                     (0);

  \node [transition] at (2.25,0) (start)  {$\fuscomp{\tmtapestart}$}
  edge [post]                     (1);

\end{tikzpicture}
and
\begin{tikzpicture}[baseline=-0.1cm,node distance=1.3cm,>=stealth',bend angle=25,auto]

  \node [inner sep = 1.5pt,circle,fill=black] at (0,0)(0) {};
  \node [inner sep = 1.5pt,circle,fill=black] at (1.5,0)(1) {};

  \node [transition] at (0.7,0)  (blank)  {$\tmblank$}
  edge [pre]                     (0)
  edge [post]                     (1);

  \node [transition] at (2.25,0) (end)  {$\tmtapeend$}
  edge [pre]                     (1);

  \node [transition] at (-0.7,0) (end)  {$\fuscomp{\tmtapeend}$}
  edge [pre]                     (0);

\end{tikzpicture}
as well as the fusion rules $\fr(\tmtapestart)$ and~$\fr(\tmtapeend)$.

$\fsym{\Gamma}= (\Gamma \setminus \Omega) + \fsym{\Omega}$, where $\fsym{\Omega} = \{ \fsym{x} \mid x \in \Omega\}$. This is because in fusion grammars fusion alphabets and terminal alphabets are disjoint but $\Omega \subsetneq \Gamma$ by definition of the Turing machine.
For example, the string $ab$ is represented by the graphs $\sg(ab)$ and $\encsg{(f(ab))} = $
\begin{tikzpicture}[baseline=-1mm]
  \foreach \x in {2,4,6}{
     \node [inner sep = 1.5pt,circle,fill=black] at (\x,0)(\x) {};
  }
  \foreach \x/\z [evaluate=\x as \y using \x+2,evaluate=\x as \p using \x+1] in {2/\fsym{a},4/\fsym{b}}{    
    \node [transition] at (\p,0)  (t\x)  {$\z$}
      edge [pre]                    (\x)
      edge [post,shorten >= 5pt]                   (\y);
  }
  \node [transition] at (1,0)  (tapestart)  {$\tmtapestart$}
      edge [post]                    (2);

  \node [transition] at (7,0)  (tapeend)  {$\tmtapeend$}
      edge [pre]                    (6);
\end{tikzpicture}
%(the latter is depicted in Figure~\ref{Fig:cdfg-TM-sg-f-abc})
, where $f\colon \Gamma^* \to \fsym{\Gamma}^*$ is defined by $f(x) = \fsym{x}$, if $x \in \Omega$, and $f(x) = x$, otherwise.

\begin{figure}[t]
\centering
\begin{subfigure}[t]{0.68\textwidth}
\begin{tikzpicture}[baseline=0.0cm,node distance=1.3cm,>=stealth',bend angle=45,auto]
  \node [inner sep = 1.5pt,circle,fill=black] at (1,0)(0) {}; %target of tape
  \node [inner sep = 1.5pt,circle,fill=black] at (1,1.25)(1) {}; % mu vertex
 
  \node [inner sep = 1.5pt,circle,fill=black] at (2.5,0)(2) {};
  \node [inner sep = 1.5pt,circle,fill=black] at (2.5,1.25)(3) {};

  \node [inner sep = 1.5pt,circle,fill=black] at (4,0)(4) {};
  \node [inner sep = 1.5pt,circle,fill=black] at (4,1.25)(5) {};

  \node [inner sep = 1.5pt,circle,fill=black] at (5.5,0)(6) {};
  \node [inner sep = 1.5pt,circle,fill=black] at (5.5,1.25)(7) {};

  \node [inner sep = 1.5pt,circle,fill=black] at (-0.5,0)(8) {};
  \node [inner sep = 1.5pt,circle,fill=black] at (-2,0)(9) {};
  \node [inner sep = 1.5pt,circle,fill=black] at (-3.5,0)(10) {};
  
  \node [transition] at (1.5,-0.5)  (t1)  {$tape$}
  edge [post]                    (0);

  \node [transition] at (0.45,0.6)  (t2)  {$cut$}
  edge [pre]                    (0)
  edge [post]                   (1);
  
  \path (1) edge [loop left] node{$\mu$}();

  \node [transition] at (1.75,0)  (t3)  {$\fsym{{\beta_1}}$}
  edge [pre]                    (0)
  edge [post]                   (2);
  \path (1) edge[->] node[above=.06]{$w_1$} (3);

  \node [transition] at (4.75,0)  (t3)  {$\fsym{{\beta_m}}$}
  edge [pre]                    (4)
  edge [post]                   (6);
  \path (5) edge[->] node[above=.06]{$w_k$} (7);

  \node [transition] at (-2.75,0)  (t3)  {$\fsym{{\alpha_1}}$}
  edge [pre]                    (10)
  edge [post]                   (9);
  \node [transition] at (0.25,0)  (t3)  {$\fsym{{\alpha_n}}$}
  edge [pre]                    (8)
  edge [post]                   (0);
    
  \path (2) edge[dotted] (4);
  \path (3) edge[dotted] (5);
  \path (8) edge[dotted] (9);
  
  \node [transition] at (-4,0)  (t3)  {$\tmtapestart$}
  edge [post]                   (10);
  
  \node [transition] at (6.3,0)  (t3)  {$\tmtapeend$}
  edge [pre]                   (6);

\end{tikzpicture}
\caption{Schematic drawing of $\tg(\alpha,\beta,w)_{tape}$, where $\alpha = \alpha_1\cdots \alpha_n$, $\beta = \beta_1\cdots \beta_m$, and $w = w_1\cdots w_k$ }
\label{Fig:cdfg-schema-tape-graph}
\end{subfigure}
\begin{subfigure}[t]{0.3\textwidth}
\begin{tikzpicture}[baseline=0.0cm,node distance=1.3cm,>=stealth',bend angle=45,auto]
  \node [inner sep = 1.5pt,circle,fill=black] at (1,0)(0) {};    
  \node [inner sep = 1.5pt,circle,fill=black] at (1,1.25)(1) {};

  \node [inner sep = 1.5pt,circle,fill=black] at (2.5,0)(2) {};
  \node [inner sep = 1.5pt,circle,fill=black] at (2.5,1.25)(3) {};

  \node [inner sep = 1.5pt,circle,fill=black] at (4,0)(4) {};
  \node [inner sep = 1.5pt,circle,fill=black] at (4,1.25)(5) {};

  \node [transition] at (1.5,-0.5)  (t1)  {$tape$}
  edge [post]                    (0);

  \node [transition] at (0.45,0.6)  (t2)  {$cut$}
  edge [pre]                    (0)
  edge [post]                   (1);
  \node [transition] at (0.34,0)  (t3)  {$\tmtapestart$}
  edge [post]                   (0);
  
  \path (1) edge [loop left] node{$\mu$}();

  \node [transition] at (1.75,0)  (t3)  {$\fsym{a}$}
  edge [pre]                    (0)
  edge [post]                   (2);
  \path (1) edge[->] node[above=.06]{$a$} (3);
  \node [transition] at (3.25,0)  (t3)  {$\fsym{b}$}
  edge [pre]                    (2)
  edge [post]                   (4);
  \path (3) edge[->] node[above=.06]{$b$} (5);

  \node [transition] at (4.8,0)  (t3)  {$\tmtapeend$}
  edge [pre]                   (4);

\end{tikzpicture}
\caption{$\tg(\varepsilon, ab,ab)_{tape}$}
\label{Fig:cdfg-TM-tape-graph-ab}
\end{subfigure}

\begin{subfigure}[b]{0.15\textwidth}
\begin{tikzpicture}[baseline=0.0cm,node distance=1.3cm,>=stealth',bend angle=45,auto]
  \node [inner sep = 1.5pt,circle,fill=black] at (1,0)(0) {};    
  \node [inner sep = 1.5pt,circle,fill=black] at (1,1.25)(1) {};
  
  \node [transition] at (1.5,-0.5)  (t1)  {$tape$}
  edge [post]                    (0);
  \node [transition] at (1.55,0.6)  (t2)  {$gen$}
  edge [pre]                    (0)
  edge [post]                   (1);
  \node [transition] at (0.45,0.6)  (t2)  {$cut$}
  edge [pre]                    (0)
  edge [post]                   (1);
  \node [transition] at (0.34,0)  (t3)  {$\tmtapestart$}
  edge [post]                   (0);
  
  \path (1) edge [loop left] node{$\mu$}();
\end{tikzpicture}
\caption{$tape_{start}$}
\label{fig:cdfg-tm-wstart}
\end{subfigure}
~~~~~
\begin{subfigure}[b]{0.18\textwidth}
\begin{tikzpicture}[baseline=0.0cm,node distance=1.3cm,>=stealth',bend angle=45,auto]
  \node [inner sep = 1.5pt,circle,fill=black] at (0,0)(0) {};    
  \node [inner sep = 1.5pt,circle,fill=black] at (2,0)(1) {};
  \node [inner sep = 1.5pt,circle,fill=black] at (0,1.25)(2) {};
  \node [inner sep = 1.5pt,circle,fill=black] at (2,1.25)(3) {};

  \node [transition] at (1,0)  (t1)  {$\fsym{x}$}
  edge [pre]                    (0)
  edge [post]                   (1);
  \node [transition] at (2,0.6)  (t2)  {$gen$}
  edge [pre]                    (1)
  edge [post]                   (3);
  \node [transition] at (0,0.6)  (t3)  {$\fuscomp{gen}$}
  edge [pre]                    (0)
  edge [post]                   (2);

  \path (2) edge[->] node[above=.06]{$x$} (3);
  
\end{tikzpicture}
\caption{$tape_{x}$}
\label{fig:cdfg-tm-wx}
\end{subfigure}
~~~~~
\begin{subfigure}[b]{0.1\textwidth}
\begin{tikzpicture}[baseline=0.0cm,node distance=1.3cm,>=stealth',bend angle=45,auto]

  \node [inner sep = 1.5pt,circle,fill=black] at (0,0)(0) {};    
  \node [inner sep = 1.5pt,circle,fill=black] at (0,1.25)(1) {};

  \node [transition] at (0,0.6)  (t1)  {$~\fuscomp{gen}~$}
  edge [pre]                    (0)
  edge [post]                   (1);
  \node [transition] at (0.7,0)  (t3)  {$\tmtapeend$}
  edge [pre]                    (0);
  
\end{tikzpicture}
\caption{$tape_{end}$}
\label{fig:cdfg-tm-wend}
\end{subfigure}
~~~~~
\begin{subfigure}[b]{0.22\textwidth}
\begin{tikzpicture}[baseline=-0.1cm,node distance=1.3cm,>=stealth',bend angle=25,auto]

  \node [inner sep = 1.5pt,circle,fill=black] at (0,0)(0) {};
  \node [inner sep = 1.5pt,circle,fill=black] at (1.5,0)(1) {};

  \node [transition] at (0.75,0)  (blank)  {$\tmblank$}
  edge [pre]                     (0)
  edge [post]                     (1);

  \node [transition] at (-0.7,0) (start)  {$\tmtapestart$}
  edge [post]                     (0);

  \node [transition] at (2.25,0) (start)  {$\fuscomp{\tmtapestart}$}
  edge [post]                     (1);

\end{tikzpicture}

~

\begin{tikzpicture}[baseline=-0.1cm,node distance=1.3cm,>=stealth',bend angle=25,auto]

  \node [inner sep = 1.5pt,circle,fill=black] at (0,0)(0) {};
  \node [inner sep = 1.5pt,circle,fill=black] at (1.5,0)(1) {};

  \node [transition] at (0.7,0)  (blank)  {$\tmblank$}
  edge [pre]                     (0)
  edge [post]                     (1);

  \node [transition] at (2.25,0) (end)  {$\tmtapeend$}
  edge [pre]                     (1);

  \node [transition] at (-0.7,0) (end)  {$\fuscomp{\tmtapeend}$}
  edge [pre]                     (0);

\end{tikzpicture}
\caption{$tape_{\tmtapestart}$ and $tape_{\tmtapeend}$}
\label{fig:cdfg-tm-wext}
\end{subfigure}
%\caption{The connected components of the start hypergraph $Z_{\tg}$.}
%\label{Fig:cdfg-TM-components-tape-graph}
%\end{figure}

\begin{subfigure}[b]{0.6\textwidth}
  \begin{tikzpicture}[baseline=0.0cm,node distance=1.3cm,>=stealth',bend angle=45,auto]
  \node [inner sep = 1.5pt,circle,fill=black] at (1,0)(0) {};    
  \node [inner sep = 1.5pt,circle,fill=black] at (1,1.25)(1) {};
  
  \node [transition] at (1.5,-0.5)  (t1)  {$tape$}
  edge [post]                    (0);
  \node [transition] at (1.55,0.6)  (t2)  {$gen$}
  edge [pre]                    (0)
  edge [post]                   (1);
  \node [transition] at (0.45,0.6)  (t2)  {$cut$}
  edge [pre]                    (0)
  edge [post]                   (1);
  \node [transition] at (0.34,0)  (t3)  {$\tmtapestart$}
  edge [post]                   (0);
  
  \path (1) edge [loop left] node{$\mu$}();
\end{tikzpicture}
  \newcommand{\cdfgtmwparam}[1]{
\begin{tikzpicture}[baseline=0.0cm,node distance=1.3cm,>=stealth',bend angle=45,auto]
  \node [inner sep = 1.5pt,circle,fill=black] at (0,0)(0) {};    
  \node [inner sep = 1.5pt,circle,fill=black] at (2,0)(1) {};
  \node [inner sep = 1.5pt,circle,fill=black] at (0,1.25)(2) {};
  \node [inner sep = 1.5pt,circle,fill=black] at (2,1.25)(3) {};

  \node [transition] at (1,0)  (t1)  {$\fsym{#1}$}
  edge [pre]                    (0)
  edge [post]                   (1);
  \node [transition] at (2,0.6)  (t2)  {$gen$}
  edge [pre]                    (1)
  edge [post]                   (3);
  \node [transition] at (0,0.6)  (t3)  {$\fuscomp{gen}$}
  edge [pre]                    (0)
  edge [post]                   (2);

  \path (2) edge[->] node[above=.06]{$#1$} (3);
  
\end{tikzpicture}
}
  \cdfgtmwparam{a}
  \cdfgtmwparam{b}
  \begin{tikzpicture}[baseline=0.0cm,node distance=1.3cm,>=stealth',bend angle=45,auto]

  \node [inner sep = 1.5pt,circle,fill=black] at (0,0)(0) {};    
  \node [inner sep = 1.5pt,circle,fill=black] at (0,1.25)(1) {};

  \node [transition] at (0,0.6)  (t1)  {$~\fuscomp{gen}~$}
  edge [pre]                    (0)
  edge [post]                   (1);
  \node [transition] at (0.7,0)  (t3)  {$\tmtapeend$}
  edge [pre]                    (0);
  
\end{tikzpicture}
\caption{The connected components needed to generate $\tg(\varepsilon, ab,ab)_{tape}$}
\label{Fig:cdfg-TM-tape-graph-ab-components}  
\end{subfigure}
\begin{subfigure}[b]{0.3\textwidth}
  \centering
  \begin{tikzpicture}[->,>=stealth',
                    semithick]
  \tikzstyle{every state}=[draw=black,text=black]

  \node [inner sep = 1.5pt,circle,fill=black] at (0,0)(0a) {};
  \node [inner sep = 1.5pt,circle,fill=black] at (1,0)(1a) {};
  \node [inner sep = 1.5pt,circle,fill=black] at (0.5,1)(2a) {};

  \node [inner sep = 1.5pt,circle,fill=black] at (1.5,0)(0b) {};
  \node [inner sep = 1.5pt,circle,fill=black] at (3,0)(1b) {};
  \node [inner sep = 1.5pt,circle,fill=black] at (2.25,1)(2b) {};
  
  \path (0a) edge[->] node[above=.06]{$a$} (1a);
  \path (1a) edge[->] node[midway, right]{$c$} (2a);
  \path (2a) edge[->] node[near start, left]{$b$} (0a);
    
  \node [transition] at (2.25,0)  (t)  {$\fsym{a}$}
  edge [pre]                    (0b)
  edge [post]                   (1b);

  \node [transition] at (2.6,0.5)  (t)  {$\fsym{c}$}
  edge [pre]                    (1b)
  edge [post]                   (2b);

  \node [transition] at (1.8,0.5)  (t)  {$\fsym{b}$}
  edge [pre]                    (2b)
  edge [post]                   (0b);
  
\end{tikzpicture}
  \caption{hypergraph obtained from $tape_a, tape_c, tape_b$}
\label{fig:cdfg-tm-tape-two-triangle}
\end{subfigure}
\caption{Tape graphs and connected components needed for their generation}
\end{figure}

\begin{definition}\label{def:tape-graph}
Let $\alpha,\beta \in \Gamma^*, w\in \Omega^*$.
Let $cut, tape, \tmtapestart$ and $\tmtapeend$ be fusion symbols with
$type(\tmtapestart) = (0,1)$,
$type(\tmtapeend) = type(tape) = (1,0)$, and
$type(cut) = (1,1)$,
and let $\sg(cut \cdot w)_{\mu,tape}$ be the string graph $\sg(cut \cdot w)$ with an additional $\mu$-labeled loop attached to $begin(\sg(w))$ and a $tape$-hyperedge attached to $begin(\sg(cut))$.
Then %the hypergraph
$\tg(\alpha,\beta,w)_{tape} = (\encsg{(f(\alpha \beta))} + \sg(cut \cdot w)_{\mu,tape})/_{ {begin(\sg(f(\beta)))} \equiv \sE(cut) }$
is a \emph{tape graph}.
A schematic drawing is depicted in Figure~\ref{Fig:cdfg-schema-tape-graph}.
\end{definition}

We depict hyperedges with one source and one target with labels $x \in \Sigma \setminus F$ by
\begin{tikzpicture}[inner sep = 1.5pt, baseline = -3pt,cap=round]
\node [circle,fill=black] (0) at (0,0) {};
\node [circle,fill=black] (1) at (1,0) {};
\path (0) edge[->] node[below=.06]{\footnotesize{$x$}} (1);
\end{tikzpicture}.

\begin{example}
The tape graph $\tg(\varepsilon, ab,ab)_{tape}$ is depicted in Figure~\ref{Fig:cdfg-TM-tape-graph-ab}.
\end{example}

Because the input to the Turing machine may be any $w\in \Omega^*$ we need a construction to generate arbitrary tape graphs corresponding to inputs.
This is realized by the following context-dependent fusion grammar, where the fusion rule $\fr(gen)$ is used to fuse $gen$- and $\fuscomp{gen}$-hyperedges in order to generate two corresponding string graphs (one terminal, one fusion labeled), and the fusion rules $\fr(\tmtapestart)$ and $\fr(\tmtapeend)$ are used to add $\tmblank$-edges to the left and right of the fusion-labeled string graph as described above.

\begin{definition}\label{def:construction-CDFG-tg}
Let $cut, \tmtapestart$ and $\tmtapeend$ be as before.
Let $F_{\tg} = \{tape,gen,cut,\tmtapestart,\tmtapeend\} + \fsym{\Gamma}$ be a fusion alphabet
with $type(gen) = type(x) = (1,1)$ for each $x \in \fsym{\Gamma}$.
Define the context-dependent fusion grammar
$\CDFG_{\tg}(\Omega, \Gamma) = (Z_{\tg}, F_{\tg}, \{\mu\}, \Omega, \{ \fr(gen), \fr(\tmtapestart), \fr(\tmtapeend) \}),$
where the start hypergraph $Z_{\tg} =  tape_{start} + tape_{end} + \sum\limits_{x \in \Omega} tape_{x} + tape_{\tmtapestart} + tape_{\tmtapeend}$ consists of the connected components
depicted in Figure~\ref{fig:cdfg-tm-wstart}--\ref{fig:cdfg-tm-wext}.
\end{definition}

\begin{example}\label{example:trans(PN)-3}
Let $\Omega = \{a,b,c\}$ and $\Gamma = \{ a,b,c,\tmblank \}$.
Then
$\CDFG_{tg}(\Omega, \Gamma) = (Z_{example},\{tape,gen,cut,\tmtapestart,\tmtapeend, \fsym{a},\fsym{b},\fsym{c}, \tmblank\}, \{\mu\}, \Omega, \{ \fr(gen), \fr(\tmtapestart), \fr(\tmtapeend) \})$
with
$Z_{example} =  tape_{start} + tape_{end} + tape_{a} + tape_{b} + tape_{c} + tape_{\tmtapestart} + tape_{\tmtapeend}$.

The tape graph $\tg(\varepsilon, ab,ab)_{tape}$, depicted in Figure~\ref{Fig:cdfg-TM-tape-graph-ab},
can be generated by applying $\fr(gen)$ three times to the connected components $tape_{start}, tape_a, tape_b$ and $tape_{end}$,
depicted in Figure~\ref{Fig:cdfg-TM-tape-graph-ab-components}.
However, due to the context-freeness of $\fr(gen)$ fusions within some connected component are also possible yielding e.g. the hypergraph in Figure~\ref{fig:cdfg-tm-tape-two-triangle} obtained from $tape_a, tape_c, tape_b$.
Note that the left connected component is terminal labeled.
However, it does not contribute to the generated language because it lacks a marker hyperedge. 
\end{example}

The following propositions show that this context-dependent fusion grammar generates certain tape graphs. But everything derivable does not contribute to the generated language because there is no possibility to obtain a connected component which is only terminal labeled.
However, with a slight modification of the grammar the generated language is $\Omega^*$ up to representation of strings as graphs.

\begin{proposition}\label{prop:cdfg-tape-graph-1}
For each $i,j \in \N, w\in \Omega^*$
exists a derivation $Z \dder^* tg(\tmblank^i, w\cdot \tmblank^j, w)_{tape}$ in
$\CDFG_{\tg}(\Omega, \Gamma)$.
\end{proposition}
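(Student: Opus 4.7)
The plan is to exhibit an explicit derivation that produces the target tape graph. Fix $w = w_1 \cdots w_n \in \Omega^*$ (with $n = 0$ allowed). I would first apply a single multiplication $m \colon \calC(Z_{\tg}) \to \N$ with $m(tape_{start}) = m(tape_{end}) = 1$, $m(tape_{\tmtapestart}) = i$, $m(tape_{\tmtapeend}) = j$, and, for each $x \in \Omega$, $m(tape_x) = |\{k \mid w_k = x\}|$ (setting $m$ to $0$ on any component not needed). The resulting hypergraph contains exactly the copies required for the construction and nothing else.

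Next, I would apply $\fr(gen)$ a total of $n+1$ times to chain the selected components from left to right: fuse the $gen$-hyperedge of $tape_{start}$ with the $\fuscomp{gen}$-hyperedge of the copy playing the role of position $1$ (that is, a copy of $tape_{w_1}$); then fuse the $gen$-hyperedge of that copy with the $\fuscomp{gen}$-hyperedge of the copy at position $2$; and so on, until finally the $gen$-hyperedge of the copy at position $n$ is fused with the $\fuscomp{gen}$-hyperedge of $tape_{end}$. In the corner case $n = 0$ one directly fuses $tape_{start}$ with $tape_{end}$. Each such fusion identifies the two bottom vertices and the two top vertices of adjacent copies, simultaneously concatenating the fusion-labeled path (over $\fsym{\Omega}$) and the terminal-labeled path (over $\Omega$) in parallel, while leaving the $tape$-, $cut$-, $\mu$-, $\tmtapestart$-, and $\tmtapeend$-hyperedges untouched.

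Then I would apply $\fr(\tmtapestart)$ exactly $i$ times, each time matching the current leftmost $\tmtapestart$-hyperedge with the $\fuscomp{\tmtapestart}$-hyperedge of a fresh $tape_{\tmtapestart}$ copy; each such step prepends a $\tmblank$-edge to the fusion-labeled path and installs a new $\tmtapestart$-hyperedge on the new leftmost vertex. Symmetrically, I would apply $\fr(\tmtapeend)$ exactly $j$ times on the right to append $j$ blank cells and re-establish the $\tmtapeend$-marker on the new rightmost vertex.

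The main obstacle is the bookkeeping: one must verify that the cumulative vertex identifications performed by the chosen matching morphisms coincide, up to isomorphism, with the single quotient used to define $\tg(\tmblank^i, w\tmblank^j, w)_{tape}$ in Definition~\ref{def:tape-graph}, and that the $tape$-, $cut$-, $\mu$-loop, $\tmtapestart$-, and $\tmtapeend$-hyperedges end up in exactly the positions prescribed there. I would handle this by induction on $n + i + j$: the base case $n = i = j = 0$ amounts to checking that fusing $tape_{start}$ with $tape_{end}$ along the $gen/\fuscomp{gen}$ pair yields $\tg(\varepsilon,\varepsilon,\varepsilon)_{tape}$, and the inductive step observes that each further fusion realizes precisely one of the local modifications (adding one letter of $w$ on the right, or one $\tmblank$ on either end) that the definition of $\tg$ performs, so the isomorphism propagates.
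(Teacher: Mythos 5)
Your proposal is correct and follows essentially the same route as the paper: select the needed copies by multiplication, chain $tape_{start}$, the $tape_{w_k}$, and $tape_{end}$ with $n+1$ applications of $\fr(gen)$, then extend with $i$ applications of $\fr(\tmtapestart)$ and $j$ of $\fr(\tmtapeend)$, verifying the result by induction. The only cosmetic difference is that you perform one multiplication up front (the normal form the paper cites in a footnote) while the paper interleaves multiplications with the two induction phases; the substance is identical.
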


\begin{proof}
  by induction on $i,j$ and the length of $w$.
  
  We first prove
  for each $w = w_1 \ldots w_n \in \Omega^*$ exists a derivation $tape_{start} + tape_{end} + \sum_{x \in \Omega} tape_{x} \dder^{n}\linebreak tg(\varepsilon, w, w)_{tape}$. Therefore, let $\#\colon \Omega^* \times \Omega \to \N$ be a mapping of a string and a symbol to the number of occurrences of this symbol in the string.

Induction base: $n=0$:
Let $m$ be a multiplicity with
$m(tape_{start}) = m(tape_{end}) = 1$,
$m(tape_{x}) = 0$ for each $x \in \Omega$.
Then
$tape_{start} + tape_{end} + \sum_{x \in \Omega} tape_{x} \dder_m tape_{start} + tape_{end} \dder_{\fr(gen)} tg(\varepsilon, \varepsilon, \varepsilon)_{tape}$.

Induction step: $w = w_1 \ldots w_{n+1}, n\ge 0$.
Let
$m(tape_{start}) = m(tape_{end}) = 1$,
$m(tape_x) = \#(w, x)$ for each $x \in \Omega$.
Then by induction hypothesis there is a derivation
$tape_{start} + tape_{end} + \sum_{x \in \Omega} tape_{x} \dder_m\linebreak tape_{start} + tape_{end} + \sum\limits_{x \in \Omega} m(tape_{x})\cdot tape_{x} \dder_{\fr(gen)}^n tg(\varepsilon, w_1 \ldots w_n, w_1\ldots w_n)_{tape} + tape_{w_{n+1}}$.
Due to the context-freeness of $\fr(gen)$ one may assume w.l.o.g. that the connected components involved in the last derivation step are
$X + tape_{end}$
where $X$ is $tg(\varepsilon, w_1 \ldots w_{n}, w_1\ldots w_{n})_{tape}$ without the $\tmtapeend$-hyperedge but with an additional $\fuscomp{gen}$-hyperedge attached to the ends of the string graphs.
Then $X + tape_{w_{n+1}} + tape_{end} \dder^2\linebreak tg(\varepsilon, w_1 \ldots w_{n+1}, w_1\ldots w_{n+1})_{tape}$.

  Now we prove
  $tg(\alpha, \beta, w)_{tape} + tape_{\tmtapestart} \dder^{i+1} tg(\tmblank^i \alpha, \beta,w)_{tape}$ for any $tg(\alpha, \beta, w)_{tape}$.
  The proof for $tg(\alpha, \beta, w)_{tape} + tape_{\tmtapeend} \dder^{j+1} tg( \alpha, \beta \tmblank^j,w)_{tape}$ for any $tg(\alpha, \beta, w)_{tape}$ is analog.

  Induction base: $i=0$.
  Let $m$ be a multiplicity with
  $m(tg(\alpha, \beta, w)_{tape}) = 0$ and
  $m(tape_{\tmtapestart}) = 1$.
  Then
  $tg(\alpha, \beta, w)_{tape} + tape_{\tmtapestart} \dder_m tg(\tmblank^0 \alpha, \beta,w)_{tape}$.

  Induction step:
  Let $m$ be a multiplicity with
  $m(tg(\alpha, \beta, w)_{tape}) = 1$ and
  $m(tape_{\tmtapestart}) = i+1$.
  Then
  $tg(\alpha, \beta, w)_{tape} + tape_{\tmtapestart} \dder_m tg(\alpha, \beta,w)_{tape} + (i+1)\cdot tape_{\tmtapestart} \dder_{\fr(\tmtapestart)}^i tg(\tmblank^i \alpha, \beta, w)_{tape} + tape_{\tmtapestart} \dder_{\fr(\tmtapestart)}\linebreak tg(\tmblank^{i+1} \alpha, \beta, w)_{tape}$.
\end{proof}

\begin{proposition}\label{prop:cdfg-tape-graph-2}
$L(\CDFG_{\tg}(\Omega, \Gamma)) = \emptyset$.
\end{proposition}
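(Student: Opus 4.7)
The plan is to exploit a structural invariant of $\CDFG_{\tg}(\Omega,\Gamma)$: every connected component that contains a $\mu$-hyperedge also contains a $cut$-hyperedge. Since $cut$ is a fusion symbol and hence not in $T \cup M = \Omega \cup \{\mu\}$, any such component lies outside $\HTM$. Consequently $\calC(H) \cap (\HTM \setminus \HT) = \emptyset$ for every derivable $H$, which immediately yields $L(\CDFG_{\tg}(\Omega,\Gamma)) = \emptyset$.

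To establish the invariant I would first locate every $\mu$-hyperedge in the start hypergraph $Z_{\tg}$. Inspection of Figure~\ref{fig:cdfg-tm-wstart} reveals that the only $\mu$-edge is the loop in $tape_{start}$, attached to a vertex $v$ that is also incident with the unique $cut$-hyperedge of $tape_{start}$. Next, I would observe that the rule set $\{\fr(gen), \fr(\tmtapestart), \fr(\tmtapeend)\}$ contains no rule whose left-hand side carries the label $cut$, so no fusion step can ever remove a $cut$-hyperedge; multiplication merely duplicates connected components wholesale. Hence every $cut$-hyperedge descended from some copy of $tape_{start}$ persists along any derivation, together with the corresponding $\mu$-loop.

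The main step, and the point where one has to be a little careful, is to argue that the $\mu$-loop and its companion $cut$-hyperedge remain in the same connected component along every derivation $Z_{\tg} \dder^{*} H$. Vertex fusion identifies vertices but never separates them, so edge-to-edge adjacencies are preserved by each direct derivation step; and since $\mu$ is a loop at the vertex $v$ which is simultaneously an endpoint of the never-removed $cut$-edge, a straightforward induction on derivation length shows that $\mu$, $v$, and $cut$ always lie in a common connected component. Combining this invariant with the inspection above, every $Y \in \calC(H)$ containing a $\mu$-edge also contains a $cut$-edge whose label lies outside $T \cup M$, so $Y \notin \HTM$ and the generated language is empty.
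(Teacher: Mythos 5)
Your proposal is correct and follows essentially the same approach as the paper: the paper's (much terser) proof likewise observes that the $cut$-hyperedge attached to the $\mu$-marked vertex can never be fused away, so no marked connected component is ever labeled only over $T\cup M$. Your version merely spells out the persistence invariant (no rule consumes $cut$-edges, and the shared attachment vertex survives every quotient) that the paper leaves implicit.
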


\begin{proof}
It is sufficient to focus on the connected components with some marker $\mu$.
  The statement holds because there is no possibility to fuse the $cut$-hyperedge which is attached to the $\mu$-hyperedge. Therefore, no connected component is only terminal and marker labeled.
%\qed
\end{proof}

\begin{proposition}\label{prop:cdfg-tape-graph-3}
Let
$CDFG_{\tg+\fuscomp{cut}}(\Omega, \Gamma) = (Z_{\tg} + z_{\fuscomp{cut}}, F_{\tg}, \{\mu\}, \Omega, P'_{\tg})$, where
$z_{\fuscomp{cut}}=$
\begin{tikzpicture}[inner sep = 0mm, baseline = -3pt]
  \node [inner sep = 1.5pt,circle,fill=black] at (0,0)(0) {};    
  \node [inner sep = 1.5pt,circle,fill=black] at (1.5,0)(1) {};

  \node [transition,minimum size=4mm] at (0.75,0)  (t2)  {$~\fuscomp{cut}~$}
  edge [pre]                    (0)
  edge [post]                   (1);
\end{tikzpicture}
and
$P'_{\tg} = P_{\tg} \cup \{\fr(cut)\}$.
Then $L(CDFG_{\tg+\fuscomp{cut}}(\Omega, \Gamma)) = \{ \sg(w) \mid w \in \Omega^*\}$.
\end{proposition}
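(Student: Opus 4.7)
The plan is to establish both inclusions.

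For $\{\sg(w) \mid w \in \Omega^*\} \setincl L(\CDFG_{\tg+\fuscomp{cut}}(\Omega,\Gamma))$, fix $w \in \Omega^*$. Proposition~\ref{prop:cdfg-tape-graph-1} (applied with $i=j=0$) yields a derivation $Z_{\tg} \dder^{*} \tg(\varepsilon,w,w)_{tape}$. Each step of this derivation leaves untouched components alone, so the same derivation, started from $Z_{\tg}+z_{\fuscomp{cut}}$ and giving $z_{\fuscomp{cut}}$ multiplicity~$1$ in every multiplication, produces $\tg(\varepsilon,w,w)_{tape}+z_{\fuscomp{cut}}$. I then apply $\fr(cut)$ to the unique $cut$-hyperedge in the tape graph and the unique $\fuscomp{cut}$-hyperedge in $z_{\fuscomp{cut}}$: both hyperedges are removed, the source of $cut$ merges with the source of $\fuscomp{cut}$ (on the fusion-labeled tape side), and the target of $cut$ merges with the target of $\fuscomp{cut}$ (on the terminal-labeled side carrying the $\mu$-loop). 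This splits off a connected component consisting precisely of $\sg(w)$ with the $\mu$-loop attached at its first vertex; applying $rem_M$ to it yields $\sg(w)$.

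For the converse, let $Z_{\tg}+z_{\fuscomp{cut}} \dder^{*} H$ and let $Y$ be a connected component of $H$ whose labels lie in $T \cup M$ and that carries at least one $\mu$-hyperedge. I must show $rem_M(Y) = \sg(w)$ for some $w \in \Omega^*$. Three structural observations do most of the work: (a)~the sole source of a $\mu$-hyperedge in any reachable hypergraph is a copy of $tape_{start}$, so $Y$ traces back to at least one such copy; (b)~$\fr(tape)$ is not among the rules, hence no $tape$-hyperedge is ever consumed and $Y$ must contain none; (c)~in $tape_{start}$ the $tape$- and the $cut$-hyperedge share the same attachment vertex, so the only way to separate $Y$ from the $tape$-hyperedge is by an application of $\fr(cut)$ that consumes a copy of $z_{\fuscomp{cut}}$.

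It remains to describe the terminal fragment still attached to the $\mu$-vertex after the cut. This fragment is grown exclusively through $\fr(gen)$ applications. A short induction on the number of $\fr(gen)$ steps that contribute to $Y$ shows that each such step either (i)~extends an existing terminal path by one $x$-edge, by pairing the trailing $gen$-hyperedge of $Y$ with the $\fuscomp{gen}$ of some $tape_x$ with $x \in \Omega$, (ii)~caps that path, by pairing the trailing $gen$-hyperedge with the $\fuscomp{gen}$ of $tape_{end}$, or (iii)~fuses a $gen$/$\fuscomp{gen}$ pair inside a component disjoint from $Y$, which is therefore irrelevant. For all fusion symbols in $Y$ to vanish the growth must start at the $\mu$-vertex and be properly capped, so $rem_M(Y) = \sg(x_1\cdots x_n)$ with $x_1,\ldots,x_n \in \Omega$. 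The main obstacle is precisely this last case analysis: one must rule out that wild interleavings of $\fr(gen)$ produce branchings, cycles, or residual fusion symbols in the $\mu$-component. Combining both inclusions yields the claimed equality.
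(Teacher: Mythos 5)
Your proposal is correct, and the forward inclusion is exactly the paper's argument: derive a tape graph via Proposition~\ref{prop:cdfg-tape-graph-1} while carrying a copy of $z_{\fuscomp{cut}}$ along, then apply $\fr(cut)$ to the $cut$-hyperedge of the tape graph and the $\fuscomp{cut}$-hyperedge of $z_{\fuscomp{cut}}$, so that $\sg(w)_\mu$ splits off as the only terminal-and-marker-labeled component. Where you diverge is the converse. The paper disposes of it in one sentence by appealing to an interchangeability/normal-form result from the original fusion-grammar paper: since every rule in $P'_{\tg}$ is context-free, any derivation can be rearranged into ``all multiplications first, then fusions in arbitrary order,'' and the constructed derivations are therefore exhaustive up to this normal form. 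You instead argue directly via structural invariants of reachable hypergraphs: every $\mu$-hyperedge descends from a copy of $tape_{start}$; the $tape$-hyperedge is a fusion label that no rule of $P'_{\tg}$ can consume, so the marked component must be severed from it, which forces an application of $\fr(cut)$ against a copy of $z_{\fuscomp{cut}}$; and the terminal fragment hanging off the $\mu$-vertex can only be built by $\fr(gen)$ steps that either extend the path by an $x$-edge ($x \in \Omega$), cap it with $tape_{end}$, or act in components disjoint from the marked one. This is a legitimate and in fact more self-contained route --- it does not import the external normal-form theorem --- at the price of the final case analysis you yourself flag, where one must also note that a $\fr(gen)$ step may attach an already-fused linear chain of $tape_x$'s in one stroke, and that pre-fused cyclic chains carry no remaining $gen$/$\fuscomp{gen}$ pair and hence can never attach to the marked component; with these two observations the induction closes and no branching, cycle, or residual fusion symbol can survive in $Y$. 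Both proofs are at a comparable level of rigor; yours makes explicit the bookkeeping that the paper delegates to the cited rearrangement lemma.
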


\begin{proof}
$Z \dder^* tg(\tmblank^i, w\cdot \tmblank^j, w)_{tape} + z_{\fuscomp{cut}} \dder_{\fr(cut)}\allowbreak  \encsg{(\tmblank^if(w)\tmblank^j)} + \sg(w)_\mu$
for any $i,j \in \N, w \in \Omega^*$, where
the first part uses the same argument as in Proposition~\ref{prop:cdfg-tape-graph-1} and
$\fr(cut)$ matches the $cut$-hyperedge in the tape graph and the $\fuscomp{cut}$-hyperedge in $z_{\fuscomp{cut}}$.
Because only the latter connected component contains a marker and is marker and terminal labeled it contributes to the language.

The converse follows from the fact that the derivation constructed above is a normal form because the derivation steps are interchangeable\footnote{In \cite{Kreowski-Kuske-Lye:17a} it is shown that every derivation can be rearranged such that first all multiplications and afterwards all (context-free) fusions in arbitrary order are performed.} due to the context-freeness.
%\qed
\end{proof}

\subsection{Hypergraph representation of a configuration}

A hypergraph representation of a configuration consists of the hypergraph representation of the Turing machine fused to some tape graph. In this way the configuration is interlinked with a specific input and with some permutation of states.
An initial configuration is obtained by fusing a $tape$-hyperedge in a hypergraph representation of the Turing machine with some $\fuscomp{tape}$-hyperedge in a tape graph.

\begin{example} The application of the fusion rule $\fr(tape)$ to the hypergraph representation of the Turing machine in Example~\ref{example:cdfg-TM-example} and the tape graph generated in Example~\ref{example:trans(PN)-3} yields the hypergraph representation of the initial configuration $(q_{start}, \varepsilon, ab)$ wrt the input adjunct $ab$ and the permutation $q_{start}q_{aux}q_{accept}$ depicted on the right-hand site in Figure~\ref{fig:cdfg-TM-example-configureation-tape-fusion}.
\end{example}
  \begin{figure}[t]
  \begin{tikzpicture}
      \node at (0,2) (tape){ \begin{tikzpicture}[baseline=0.0cm,node distance=1.3cm,>=stealth',bend angle=45,auto]
  \node [inner sep = 1.5pt,circle,fill=black] at (1,0)(0) {};    
  \node [inner sep = 1.5pt,circle,fill=black] at (1,1.25)(1) {};

  \node [inner sep = 1.5pt,circle,fill=black] at (2.5,0)(2) {};
  \node [inner sep = 1.5pt,circle,fill=black] at (2.5,1.25)(3) {};

  \node [inner sep = 1.5pt,circle,fill=black] at (4,0)(4) {};
  \node [inner sep = 1.5pt,circle,fill=black] at (4,1.25)(5) {};

  \node [transition] at (1.5,-0.5)  (t1)  {$tape$}
  edge [post]                    (0);

  \node [transition] at (0.45,0.6)  (t2)  {$cut$}
  edge [pre]                    (0)
  edge [post]                   (1);
  \node [transition] at (0.34,0)  (t3)  {$\tmtapestart$}
  edge [post]                   (0);
  
  \path (1) edge [loop left] node{$\mu$}();

  \node [transition] at (1.75,0)  (t3)  {$\fsym{a}$}
  edge [pre]                    (0)
  edge [post]                   (2);
  \path (1) edge[->] node[above=.06]{$a$} (3);
  \node [transition] at (3.25,0)  (t3)  {$\fsym{b}$}
  edge [pre]                    (2)
  edge [post]                   (4);
  \path (3) edge[->] node[above=.06]{$b$} (5);

  \node [transition] at (4.8,0)  (t3)  {$\tmtapeend$}
  edge [pre]                   (4);

\end{tikzpicture}};
      \node at (0,-2) (tm){ \begin{tikzpicture}[->,>=stealth',shorten >=1pt,auto,node distance=2.8cm,
                    semithick]
  \tikzstyle{every state}=[draw=black,text=black]

  \node [inner sep = 1.5pt,circle,fill=black] at (1,0)(1) {};
  
  \node [transition] at (2,0.5)  (t2)  {$~\fuscomp{tape}~$}
  edge [pre]                    (1);
 
  \node[state,accepting]   (acc)  [below = 1.4 of 1]      {$q_{accept}$};

  \path (acc) edge[loop left]  node[below = 0.1] {$acc$} (acc);

  \node[state] [right of=acc] (start)                    {$q_{start}$};
  \node[state]         (q1) [right of=start] {$q_{aux}$};

  \node [transition] [below = 0.3 of 1]  (head)  {$head$}
  edge [pre,dashed] node [right = 0.01, very near start] {$1$}                    (start)
  edge [pre,bend left=28,dashed] node [below = 0.01, very near start] {$2$}                  (q1)
  edge [pre,dashed] node [below = 0.01, very near start] {$3~~~$}                  (acc)
  edge [post,dashed]                    (1);

  \path (start) edge[bend left=8]              node {$b/\tmblank/r$} (q1);
  \path (q1) edge[bend left=8]              node {$b/b/n$} (start);
  \path (start) edge              node {$a/c/r$} (acc);

\end{tikzpicture}};

      \node at (2,0) (arrowstart){};
      \node at (3,0) (arrowend){};

      \node at (8,0) (right){\begin{tikzpicture}[->,>=stealth',shorten >=1pt,auto,node distance=2.8cm,
                    semithick]
  \tikzstyle{every state}=[draw=black,text=black]

    \foreach \x in {1,3,5}{
     \node [inner sep = 1.5pt,circle,fill=black] at (\x,0)(\x) {};
  }
  \foreach \x in {2,4,6}{
     \node [inner sep = 1.5pt,circle,fill=black] at (\x,1)(\x) {};
  }
 
%  \node [transition] at (0.5,-0.5)  (t1)  {$tape$}
%  edge [post]                    (1);
  \node [transition] at (1.5,0.5)  (t2)  {$cut$}
  edge [pre]                    (1)
  edge [post]                   (2);
  \path (2) edge [loop left] node{$\mu$}();

  \foreach \x/\z [evaluate=\x as \y using \x+2] in {2/a,4/b}{
    \path (\x) edge[->,shorten >= 5pt] node[above=.06]{$\z$} (\y);
  }
  \foreach \x/\z [evaluate=\x as \y using \x+2,evaluate=\x as \p using \x+1] in {1/\fsym{a},3/\fsym{b}}{    
    \node [transition] at (\p,0)  (t\x)  {$\z$}
      edge [pre]                    (\x)
      edge [post,shorten >= 5pt]                   (\y);
  }

  \node [transition] at (0,0)  (start)  {$\tmtapestart$}
  edge [post]                    (1);
  \node [transition] at (6,0)  (end)  {$\tmtapeend$}
  edge [pre]                    (5);
 
  \node[state,accepting]   (acc)  [below = 1.4 of 1]      {$q_{accept}$};

  \path (acc) edge[loop below]  node[below] {$acc$} (acc);

  \node[state] [right of=acc] (start)                    {$q_{start}$};
  \node[state]         (q1) [right of=start] {$q_{aux}$};

  \node [transition] [below = 0.3 of 1]  (head)  {$head$}
  edge [pre,dashed] node [right = 0.01, very near start] {$1$}                    (start)
  edge [pre,bend left=28,dashed] node [below = 0.01, very near start] {$2$}                  (q1)
  edge [pre,dashed] node [below = 0.01, very near start] {$3~~~$}                  (acc)
  edge [post,dashed]                    (1);

  \path (start) edge[bend left=8]              node {$b/\tmblank/r$} (q1);
  \path (q1) edge[bend left=8]              node {$b/b/n$} (start);
  \path (start) edge              node {$a/c/r$} (acc);

\end{tikzpicture}};
      
      \path (arrowstart) edge[-implies,thick,
        double distance=1.4pt,
        preaction = {decorate},
        postaction = {draw,line width=1.4pt, white,shorten >= 4.5pt}] node[below =1pt]{ $\fr(tape)$}  (arrowend);
  \end{tikzpicture}
  \caption{
    Fusion of the hypergraph representation of the Turing machine in Example~\ref{example:cdfg-TM-example} and the tape graph generated in Example~\ref{example:trans(PN)-3} yields the configuration $(q_{start}, \varepsilon, ab)$ wrt the input adjunct $ab$ and the permutation $q_{start}q_{aux}q_{accept}$.}
  \label{fig:cdfg-TM-example-configureation-tape-fusion}
\end{figure}
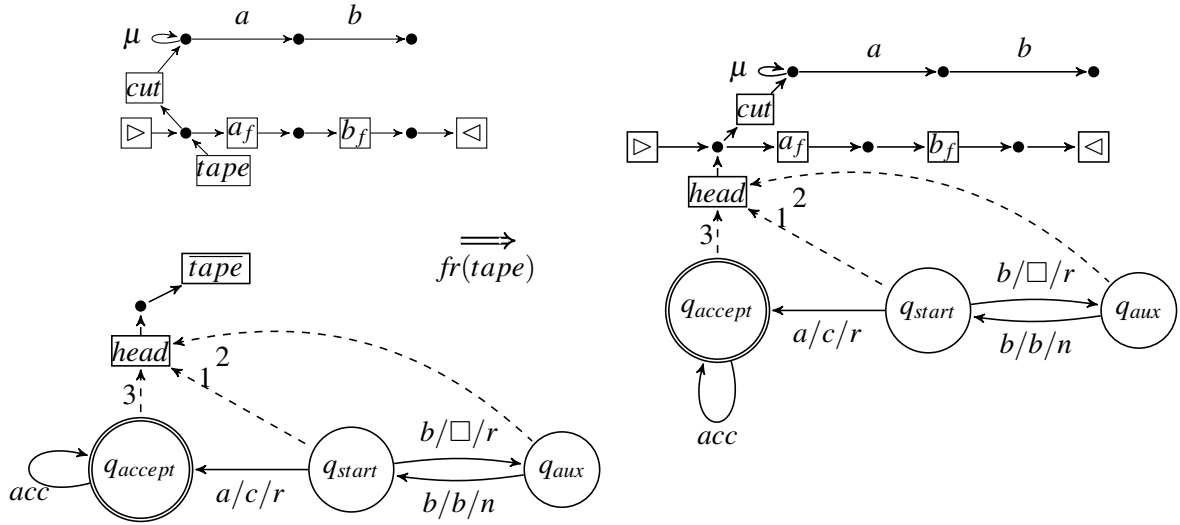

\begin{definition}\label{def:construction-hg_ci}
Let $c = (q, \alpha, \beta) \in \mathit{conf}(\TM)$, let $w\in \Omega^*$,
let $\sigma = q_1 \cdots q_{|Q|}$ be a permutation of $Q$ where $q_1 = q$.
Then $H = hg(\sigma, \alpha, \beta, w)$ is the \emph{hypergraph representation of the configuration} $c$ \emph{with adjunct} $w$ \emph{and permutation} $\sigma$, where $w$ is the terminal labeled string graph in the tape graph. $H$ is defined by $hg(\TM,\sigma) + \tg(\alpha, \beta, w)_{tape} \dder_{\fr(tape)} H$.
\end{definition}

\subsection{Simulating steps of a Turing machine by context-dependent fusion rules}

In order to simulate a step further connected components are needed.
These connected components encode the substitution of a symbol on the tape, the movement of the head and the transition to the next state.
In order to move the head to the left or to the right our construction takes both the current symbol and the symbol to the left of the head into account.
The relations of the Turing machine can be seen as replacing or deleting the current symbol and (maybe) inserting a new symbol left or right of the head.
In the graph representation this corresponds to deleting and inserting edges.
These deletions and insertions are done with respective fusions of complementary labeled hyperedges.
The hypergraphs in Figure~\ref{fig:cdfg-tm-1-l1}--\ref{fig:cdfg-tm-1-r1} are schematic drawings of the connected components used for simulating a step of a Turing machine.
The dots indicate that there are $|Q|$ vertices as sources.
Two complementary $head$ and $\fuscomp{head}$-hyperedges attached to the same vertices are part of this connected component, where the ordering of the source attachments of the $head$-hyperedge implements a permutation
such that the first and $i$th source are swapped.
Formally, these components are defined as follows.

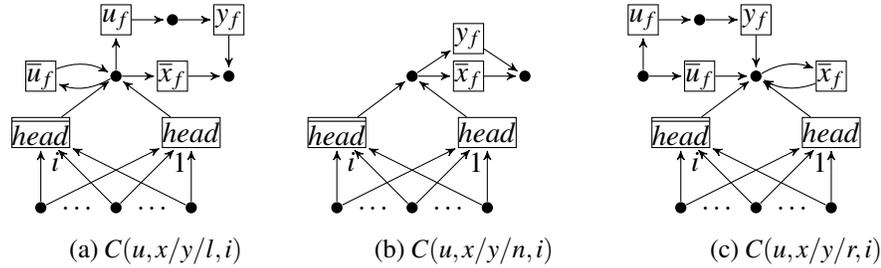
\begin{figure}[t]
\centering
\begin{subfigure}[t]{0.24\textwidth}
\begin{tikzpicture}[baseline=0.3cm,node distance=1.3cm,>=stealth',bend angle=25,auto]

  %bottom
  \node [inner sep = 1.5pt,circle,fill=black] at (1,0)(s1) {};
  \node [inner sep = 1.5pt,circle,fill=black] at (2,0)(s2) {};
  \node [inner sep = 1.5pt,circle,fill=black] at (3,0)(s4) {};

  %top
  \node [inner sep = 1.5pt,circle,fill=black] at (2,1.75)(0) {};

  %right nodes
  \node [inner sep = 1.5pt,circle,fill=black] at (2.75,2.5)(1) {};
  \node [inner sep = 1.5pt,circle,fill=black] at (3.5,1.75)(2) {};
  
  \node at (1.5,0) {$\dots$};    
  \node at (2.5,0) {$\dots$};    

  \node [transition] at (1,1)  (t1)  {$\fuscomp{head}$}
  edge [pre] node [left = 0.01, near start]{}          (s1)
  edge [pre] node [left = 0.01, near start]{$i$} (s2)
  edge [pre]                     (s4)
  edge [post]                    (0);

  \node [transition] at (3,1)  (t1)  {$head$}
  edge [pre]                     (s1)
  edge [pre] node [right = 0.01, near start]{$1$}                    (s2)
  edge [pre]                     (s4)
  edge [post]                    (0);

  \node [transition] at (1,1.75)  (x)  {$\fsym{\fuscomp{u}}$}
  edge [pre,bend right]                     (0)
  edge [post, bend left]                     (0);
  
  \node [transition] at (2.75,1.75) (zbar)  {$\fsym{\fuscomp{x}}$}
  edge [pre]                     (0)
  edge [post]                    (2);

  \node [transition] at (3.5,2.5)  (y)  {$\fsym{y}$}
  edge [post]                     (2)
  edge [pre]                    (1);

  \node [transition] at (2,2.5) (z)  {$\fsym{u}$}
  edge [pre]                     (0)
  edge [post]                    (1);
               
\end{tikzpicture}
\caption{$C(u, x/y/l, i)$}
\label{fig:cdfg-tm-1-l1}
\end{subfigure}
\begin{subfigure}[t]{0.26\textwidth}
\begin{tikzpicture}[baseline=0.3cm,node distance=1.3cm,>=stealth',bend angle=25,auto]

  %bottom
  \node [inner sep = 1.5pt,circle,fill=black] at (1,0)(s1) {};
  \node [inner sep = 1.5pt,circle,fill=black] at (2,0)(s2) {};
  \node [inner sep = 1.5pt,circle,fill=black] at (3,0)(s4) {};

  %top
  \node [inner sep = 1.5pt,circle,fill=black] at (2,1.75)(0) {};

  %left nodes
  \node [inner sep = 1.5pt,circle,fill=black] at (3.5,1.75)(1) {};
  
  \node at (1.5,0) {$\dots$};    
  \node at (2.5,0) {$\dots$};    

  \node [transition] at (1,1)  (t1)  {$\fuscomp{head}$}
  edge [pre] node [left = 0.01, near start]{}          (s1)
  edge [pre] node [left = 0.01, near start]{$i$} (s2)
  edge [pre]                     (s4)
  edge [post]                    (0);

  \node [transition] at (3,1)  (t1)  {$head$}
  edge [pre]                     (s1)
  edge [pre] node [right = 0.01, near start]{$1$}                    (s2)
  edge [pre]                     (s4)
  edge [post]                    (0);

  \node [transition] at (2.75,1.75)  (t2)  {$\fsym{\fuscomp{x}}$}
  edge [pre]                    (0)
  edge [post]                   (1);
  \node [transition] at (2.75,2.25)  (t2)  {$\fsym{y}$}
  edge [pre]                    (0)
  edge [post]                   (1);
               
\end{tikzpicture}
\caption{$C(u, x/y/n, i)$}
\label{fig:cdfg-tm-1-n1}
\end{subfigure}
\begin{subfigure}[t]{0.28\textwidth}
\begin{tikzpicture}[baseline=0.3cm,node distance=1.3cm,>=stealth',bend angle=25,auto]

  %bottom
  \node [inner sep = 1.5pt,circle,fill=black] at (1,0)(s1) {};
  \node [inner sep = 1.5pt,circle,fill=black] at (2,0)(s2) {};
  \node [inner sep = 1.5pt,circle,fill=black] at (3,0)(s4) {};

  %top
  \node [inner sep = 1.5pt,circle,fill=black] at (2,1.75)(0) {};

  %left nodes
  \node [inner sep = 1.5pt,circle,fill=black] at (1.25,2.5)(1) {};
  \node [inner sep = 1.5pt,circle,fill=black] at (0.5,1.75)(2) {};
  
  \node at (1.5,0) {$\dots$};    
  \node at (2.5,0) {$\dots$};    

  \node [transition] at (1,1)  (t1)  {$\fuscomp{head}$}
  edge [pre] node [left = 0.01, near start]{}          (s1)
  edge [pre] node [left = 0.01, near start]{$i$} (s2)
  edge [pre]                     (s4)
  edge [post]                    (0);

  \node [transition] at (3,1)  (t1)  {$head$}
  edge [pre]                     (s1)
  edge [pre] node [right = 0.01, near start]{$1$}                    (s2)
  edge [pre]                     (s4)
  edge [post]                    (0);

  \node [transition] at (3,1.75)  (x)  {$\fsym{\fuscomp{x}}$}
  edge [pre,bend right]                     (0)
  edge [post, bend left]                     (0);

  \node [transition] at (1.25,1.75) (zbar)  {$\fsym{\fuscomp{u}}$}
  edge [post]                     (0)
  edge [pre]                    (2);

  \node [transition] at (2,2.5)  (y)  {$\fsym{y}$}
  edge [post]                     (0)
  edge [pre]                    (1);

  \node [transition] at (0.5,2.5) (z)  {$\fsym{u}$}
  edge [pre]                     (2)
  edge [post]                    (1);
               
\end{tikzpicture}
\caption{$C(u, x/y/r, i)$}
\label{fig:cdfg-tm-1-r1}
\end{subfigure}
\caption{Schematic drawings of connected components used for simulating a step of a Turing machine}
\label{Fig:cdfg-TM-components-cl-cn-cr-acc}
\end{figure}

\begin{definition}\label{def:construction-CDFG(pts)-step-simulation}
Let $\{head\} + \fsym{\Gamma}$ be a fusion alphabet with $type(head) = (|Q|,1)$ and $type(x) = (1,1)$ for each $x \in \fsym{\Gamma}$.
Let $\Lambda = \{ x/y/dir \mid x,y \in \Gamma, dir \in \{l,n,r\}\}$.

%\begin{enumerate}
%\item
Define
for each
$u \in \fsym{\Gamma}, x/y/l \in \Lambda, i \in \{1,\ldots, |Q|\}$ the connected component  %\\
$C(u,x/y/l,i) =\linebreak ( \{v_1, \ldots, v_{|Q|+3} \}, \{ head, \fuscomp{head}, \fuscomp{u}, u, \fuscomp{x}, y \},\sE,\tE,\lE)$
where
%\begin{itemize}
%\item
$\sE(head) = v_1 v_2\cdots v_i \cdots  v_{|Q|}$, $\tE(head) = v_{|Q|+1}$,\linebreak $\lE(head) = head$,
%\item
$\sE(\fuscomp{head}) = v_i v_2\cdots v_1 \cdots v_{|Q|}$, $\tE(\fuscomp{head}) = v_{|Q|+1}$, $\lE(\fuscomp{head}) = \fuscomp{head}$,
%\item
$\sE(\fuscomp{u}) = \tE(\fuscomp{u}) = v_{|Q|+1}$, $\lE(\fuscomp{u}) = \fsym{\fuscomp{u}}$,
%\item
$\sE(u) = v_{|Q|+1}$, $\tE(u) = v_{|Q|+3}$, $\lE(u) = \fsym{u}$,
%\item
$\sE(\fuscomp{x}) = v_{|Q|+1}$, $\tE(\fuscomp{x}) = v_{|Q|+2}$, $\lE(\fuscomp{x}) = \fsym{\fuscomp{x}}$,
%\item
$\sE(y) = v_{|Q|+2}$, $\tE(y) = v_{|Q|+3}$, $\lE(y) = \fsym{y}$
.
%\end{itemize}

%\item
Define for each
$x/y/n \in \Lambda, i \in \{1,\ldots, |Q|\}$ the connected component %\\
$C(u,x/y/n,i) =\linebreak ( \{v_1, \ldots, v_{|Q|+2} \}, \{ head, \fuscomp{head}, \fuscomp{x}, y \},\sE,\tE,\lE)$
where
%\begin{itemize}
%\item
$\sE(head) = v_1 v_2\cdots v_i \cdots  v_{|Q|}$, $\tE(head) = v_{|Q|+1}$,\linebreak $\lE(head) = head$,
%\item
$\sE(\fuscomp{head}) = v_i v_2\cdots v_1 \cdots v_{|Q|}$, $\tE(\fuscomp{head}) = v_{|Q|+1}$, $\lE(\fuscomp{head}) = \fuscomp{head}$,
%\item
$\sE(\fuscomp{x}) = \sE(y) v_{|Q|+1}$, $\tE(\fuscomp{x}) = \tE(y) = v_{|Q|+2}$, $\lE(\fuscomp{x}) = \fsym{\fuscomp{x}}$, $\lE(y) = \fsym{y}$.
%\end{itemize}

%\item
Define for each
$u \in \fsym{\Gamma}, x/y/r \in \Lambda, i \in \{1,\ldots, |Q|\}$ the connected component
$C(u,x/y/r,i) =\linebreak ( \{v_1, \ldots, v_{|Q|+3} \}, \{ head, \fuscomp{head}, \fuscomp{x}, u, \fuscomp{u}, y \},\sE,\tE,\lE)$
where
%\begin{itemize}
%\item
$\sE(head) = v_1 v_2\cdots v_i \cdots  v_{|Q|}$, $\tE(head) = v_{|Q|+2}$,\linebreak $\lE(head) = head$,
%\item
$\sE(\fuscomp{head}) = v_i v_2\cdots v_1 \cdots v_{|Q|}$, $\tE(\fuscomp{head}) = v_{|Q|+2}$, $\lE(\fuscomp{head}) = \fuscomp{head}$,
%\item
$\sE(\fuscomp{x}) = \tE(\fuscomp{x}) = v_{|Q|+2}$, $\lE(\fuscomp{x}) = \fsym{\fuscomp{x}}$,
%\item
$\sE(u) = v_{|Q|+1}$, $\tE(u) = v_{|Q|+3}$, $\lE(u) = \fsym{u}$,
%\item
$\sE(\fuscomp{u}) = v_{|Q|+1}$, $\tE(\fuscomp{u}) = v_{|Q|+2}$, $\lE(\fuscomp{u}) = \fsym{\fuscomp{u}}$,
%\item
$\sE(y) = v_{|Q|+2}$, $\tE(e_4) = v_{|Q|+3}$, $\lE(y) = \fsym{y}$
.
%\end{itemize}
%\end{enumerate}
\end{definition}

In order to simulate a step of a Turing machine context-dependent fusion rules are needed.
Some of the context conditions derive directly from the semantics of a Turing machine.
For example, the step $(p, \alpha u, x, \beta) \tmstep{\TM} (q, \alpha, u, y \beta)$
can only be applied if $(p,x,y,l,q) \in \Delta$, the Turing machine is in state $p$ and reads the symbol $x$.
Other context conditions are needed because (context-dependent) fusion rules can only consume two complementary hyperedges in one derivation step but a step of a Turing machine is much more complicated (head movement, tape manipulation, state transition). Hence, several rules and rule applications are needed to simulate such a step.
Furthermore, in our construction positive and negative context conditions are needed to restrict the application to obtain a correct and sound simulation.
The set of context-dependent fusion rules $P_{\Delta}$ specified in Definition~\ref{def:construction-CDFG(pts)-step-simulation-rules} contains rules with respect to the fusion symbol $head$ and rules with respect to the fusion symbol $x$ for each $x \in \Gamma$.
The first are used to fuse the $head$-hyperedge in the graph representation of a configuration with the correct connected component used for simulating the step of the Turing machine and perform the state transition.
The latter are used to modify the tape and move the head correctly.

\begin{definition}\label{def:construction-CDFG(pts)-step-simulation-rules}
Define $P_{\Delta}$ as the following set of context-dependent fusion rules.
\[
P_{\Delta} = \{ \Delta(u,\lambda) \mid  u \in \Gamma, \lambda \in \Lambda \}
\cup \{fuse\_2in(x), fuse\_2out(x), fuse\_loop\_in(x), fuse\_loop\_out(x) \mid x \in \Gamma\}
\]

%\begin{itemize}
%% Apply Delta
%\item
$\Delta(u,\lambda) = (\fr(head), \{fr(head) \to  PC(u,\lambda,j) + C(u,\lambda,j) \},$ $\{fr(head) \to  twoin(u) + \fuscomp{h}^\bullet \mid u \in \Gamma\}) \cup \{fr(head) \to  twoout(u) + \fuscomp{h}^\bullet \mid u \in \Gamma\})$,
where the morphism in the positive context maps the $head$-hyperedge to the one in $PC(u,\lambda,j)$, depicted in Figure~\ref{fig:cdfg-tm-2a}, and the $\fuscomp{head}$-hyperedge to the one in $C(u,\lambda,j)$.
Note that, the connected component $C(u,\lambda,j)$ is induced by the $j$th source of the $head$-hyperedge and the parameters $u,\lambda$.
$twoin(u)$, $twoout(u)$ and $\fuscomp{h}^\bullet$ are depicted in Figure~\ref{fig:cdfg-tm-2b},~\ref{fig:cdfg-tm-2c} and \ref{fig:cdfg-tm-2headbullet}, respectively.

%%%%%%%%%%%%%%%%%%%%%%%%%% Fig Contexts CDFG TM %%%%%%%%%%%%%%%%%%%%%%%%%%%%%%%%%%%%
\begin{figure}[t]
%\centering
\begin{subfigure}[t]{0.25\textwidth}
\begin{tikzpicture}[baseline=0.3cm,node distance=1.3cm,>=stealth',bend angle=45,auto]

  %net
  \node [inner sep = 1.5pt,circle,fill=black] at (0,0)(s2) {};
  \node [inner sep = 1.5pt,circle,fill=black] at (1.85,0)(s3) {};
  \node [inner sep = 1.5pt,circle,fill=black] at (2.6,0)(s4) {};    
  \node [inner sep = 1.5pt,circle,fill=black] at (1.25,1.7)(top0) {};    
  \node [inner sep = 1.5pt,circle,fill=black] at (2.6,1.7)(top1) {};    
  \node [inner sep = 1.5pt,circle,fill=black] at (-0.25,1.7)(top2) {};    
  
  \node at (2.2,0) {$\dots$};    

  \node [transition] at (1.25,1)  (t1)  {$head$}
  edge [pre] node [left = 0.1, very near start]{$1$}        (s2)
  edge [pre] node [left = 0.1,  near start]{$j$}        (s3)
  edge [pre] (s4)
  edge [post]                    (top0);

  \node [transition] at (1.9,1.7)  (a)  {$\fsym{x}$}
  edge [pre]        (top0)
  edge [post]       (top1);

  \node [transition] at (0.5,1.7)  (a)  {$\fsym{u}$}
  edge [pre]        (top2)
  edge [post]       (top0);

  \path (s2) edge[->] node[below=.001]{~~$x/y/dir$} (s3);    
\end{tikzpicture}
\caption{$PC(u,x/y/dir,j)$}
\label{fig:cdfg-tm-2a}
\end{subfigure}
\begin{subfigure}[t]{0.16\textwidth}
\begin{tikzpicture}[baseline=0.3cm,node distance=1.3cm,>=stealth',bend angle=45,auto]

  %net
  \node [inner sep = 1.5pt,circle,fill=black] at (0.85,0)(s2) {};
  \node [inner sep = 1.5pt,circle,fill=black] at (1.65,0)(s3) {};

  \node [inner sep = 1.5pt,circle,fill=black] at (1.25,1.7)(top0) {};    

  \node [inner sep = 1.5pt,circle,fill=black] at (-0.5,2)(top1) {};    
  \node [inner sep = 1.5pt,circle,fill=black] at (-0.5,1.4)(top2) {};    
  
  \node at (1.25,0) {$\dots$};    

  \node [transition] at (1.25,1)  (t1)  {$head$}
  edge [pre]         (s2)
  edge [pre]         (s3)
  edge [post]                    (top0);

  \node [transition] at (0.25,1.4)  (a)  {$\fsym{u}$}
  edge [pre]        (top2)
  edge [post]       (top0);

    \node [transition] at (0.25,2)  (a)  {$\fsym{\fuscomp{u}}$}
  edge [pre]        (top1)
  edge [post]       (top0);

\end{tikzpicture}
\caption{$twoin(u)$}
\label{fig:cdfg-tm-2b}
\end{subfigure}
\begin{subfigure}[t]{0.16\textwidth}
\begin{tikzpicture}[baseline=0.3cm,node distance=1.3cm,>=stealth',bend angle=45,auto]

  %net
  \node [inner sep = 1.5pt,circle,fill=black] at (0.85,0)(s2) {};
  \node [inner sep = 1.5pt,circle,fill=black] at (1.65,0)(s3) {};

  \node [inner sep = 1.5pt,circle,fill=black] at (1.25,1.7)(top0) {};    

  \node [inner sep = 1.5pt,circle,fill=black] at (3,2)(top1) {};    
  \node [inner sep = 1.5pt,circle,fill=black] at (3,1.4)(top2) {};    
  
  \node at (1.25,0) {$\dots$};    

  \node [transition] at (1.25,1)  (t1)  {$head$}
  edge [pre]         (s2)
  edge [pre]         (s3)
  edge [post]                    (top0);

  \node [transition] at (2.25,1.4)  (a)  {$\fsym{u}$}
  edge [pre]        (top0)
  edge [post]       (top2);

    \node [transition] at (2.25,2)  (a)  {$\fsym{\fuscomp{u}}$}
  edge [pre]        (top0)
  edge [post]       (top1);

\end{tikzpicture}
\caption{$twoout(u)$}
\label{fig:cdfg-tm-2c}
\end{subfigure}
\begin{subfigure}[t]{0.09\textwidth}
\begin{tikzpicture}[baseline=0.3cm,node distance=1.3cm,>=stealth',bend angle=45,auto]

  %net
  \node [inner sep = 1.5pt,circle,fill=black] at (0,0)(s2) {};
  \node [inner sep = 1.5pt,circle,fill=black] at (0.7,0)(s3) {};

  \node [inner sep = 1.5pt,circle,fill=black] at (0.35,1.7)(top0) {};    
  
  \node at (.35,0) {$\dots$};    

  \node [transition] at (.35,1)  (t1)  {$\fuscomp{head}$}
  edge [pre]         (s2)
  edge [pre]         (s3)
  edge [post]                    (top0);

\end{tikzpicture}
\caption{$\fuscomp{h}^\bullet$}
\label{fig:cdfg-tm-2headbullet}
\end{subfigure}
\begin{tabular}{c}
\begin{subfigure}{0.2\textwidth}
\begin{tikzpicture}[baseline=0.3cm,node distance=1.3cm,>=stealth',bend angle=25,auto]

  \node [inner sep = 1.5pt,circle,fill=black] at (2,0)(0) {};
  \node [inner sep = 1.5pt,circle,fill=black] at (0,0)(1) {};

  \node [transition] at (3,0)  (x)  {$\fsym{\fuscomp{x}}$}
  edge [pre,bend right]                     (0)
  edge [post, bend left]                     (0);

  \node [transition] at (1,0) (zbar)  {$\fsym{x}$}
  edge [post]                     (0)
  edge [pre]                    (1);              
\end{tikzpicture}
\caption{$loop/in(x)$}
\label{fig:cdfg-tm-2d}
\end{subfigure}\\
\begin{subfigure}{0.2\textwidth}
\begin{tikzpicture}[baseline=0.3cm,node distance=1.3cm,>=stealth',bend angle=25,auto]

  \node [inner sep = 1.5pt,circle,fill=black] at (2,0)(0) {};
  \node [inner sep = 1.5pt,circle,fill=black] at (4,0)(1) {};

  \node [transition] at (1,0)  (x)  {$\fsym{\fuscomp{x}}$}
  edge [pre,bend right]                     (0)
  edge [post, bend left]                     (0);

  \node [transition] at (3,0) (zbar)  {$\fsym{x}$}
  edge [post]                     (1)
  edge [pre]                    (0);              
\end{tikzpicture}
\caption{$loop/out(x)$}
\label{fig:cdfg-tm-2e}
\end{subfigure}
\end{tabular}

\begin{subfigure}{0.22\textwidth}
\begin{tikzpicture}[baseline=0.3cm,node distance=1.3cm,>=stealth',bend angle=45,auto]

  %net
  \node [inner sep = 1.5pt,circle,fill=black] at (0.85,0)(s2) {};
  \node [inner sep = 1.5pt,circle,fill=black] at (1.65,0)(s3) {};

  \node [inner sep = 1.5pt,circle,fill=black] at (1.25,1.7)(top0) {};    

  \node [inner sep = 1.5pt,circle,fill=black] at (-0.5,1.9)(top1) {};    
  \node [inner sep = 1.5pt,circle,fill=black] at (-0.5,1.5)(top2) {};    
  
  \node at (1.25,0) {$\dots$};    

  \node [transition] at (1.75,1)  (t1)  {$\fuscomp{head}$}
  edge [pre]         (s2)
  edge [pre]         (s3)
  edge [post]                    (top0);

  \node [transition] at (0.75,1)  (t1)  {$head$}
  edge [pre]         (s2)
  edge [pre]         (s3)
  edge [post]                    (top0);

  \node [transition] at (0.25,1.5)  (a)  {$\fsym{x}$}
  edge [pre]        (top2)
  edge [post]       (top0);

  \node [transition] at (0.25,1.9)  (a)  {$\fsym{\fuscomp{x}}$}
  edge [pre]        (top1)
  edge [post]       (top0);

\end{tikzpicture}
\caption{$twoin2h(x)$}
\label{fig:cdfg-tm-2bb}
\end{subfigure}
\begin{subfigure}{0.22\textwidth}
\begin{tikzpicture}[baseline=0.3cm,node distance=1.3cm,>=stealth',bend angle=45,auto]

  %net
  \node [inner sep = 1.5pt,circle,fill=black] at (0.85,0)(s2) {};
  \node [inner sep = 1.5pt,circle,fill=black] at (1.65,0)(s3) {};

  \node [inner sep = 1.5pt,circle,fill=black] at (1.25,1.7)(top0) {};    

  \node [inner sep = 1.5pt,circle,fill=black] at (2.8,1.9)(top1) {};    
  \node [inner sep = 1.5pt,circle,fill=black] at (2.8,1.5)(top2) {};    
  
  \node at (1.25,0) {$\dots$};    

  \node [transition] at (1.75,1)  (t1)  {$\fuscomp{head}$}
  edge [pre]         (s2)
  edge [pre]         (s3)
  edge [post]                    (top0);

  \node [transition] at (0.75,1)  (t1)  {$head$}
  edge [pre]         (s2)
  edge [pre]         (s3)
  edge [post]                    (top0);

  \node [transition] at (2.0,1.5)  (a)  {$\fsym{x}$}
  edge [pre]        (top0)
  edge [post]       (top2);

    \node [transition] at (2.0,1.9)  (a)  {$\fsym{\fuscomp{x}}$}
  edge [pre]        (top0)
  edge [post]       (top1);

\end{tikzpicture}
\caption{$twoin2h(x)$}
\label{fig:cdfg-tm-2cb}
\end{subfigure}
\begin{tabular}{c}
\begin{subfigure}{0.33\textwidth}
\begin{tikzpicture}[baseline=0.3cm,node distance=1.3cm,>=stealth',bend angle=25,auto]

  \node [inner sep = 1.5pt,circle,fill=black] at (2,0)(0) {};
  \node [inner sep = 1.5pt,circle,fill=black] at (0,0)(1) {};
  \node [inner sep = 1.5pt,circle,fill=black] at (0,0.5)(2) {};

  \node [transition] at (3,0)  (zloop)  {$\fsym{\fuscomp{z}}$}
  edge [pre,bend right]                     (0)
  edge [post, bend left]                     (0);

  \node [transition] at (1,0.5) (x)  {$\fsym{\fuscomp{x}}$}
  edge [post]                     (0)
  edge [pre]                    (2);              

  \node [transition] at (1,0) (x)  {$\fsym{x}$}
  edge [post]                     (0)
  edge [pre]                    (1);              
\end{tikzpicture}
\caption{$twoinextraloop(x,z)$}
\label{fig:cdfg-tm-2g}
\end{subfigure}\\
\begin{subfigure}{0.33\textwidth}
\begin{tikzpicture}[baseline=0.3cm,node distance=1.3cm,>=stealth',bend angle=25,auto]

  \node [inner sep = 1.5pt,circle,fill=black] at (2,0)(0) {};
  \node [inner sep = 1.5pt,circle,fill=black] at (4,0)(1) {};
  \node [inner sep = 1.5pt,circle,fill=black] at (4,0.5)(2) {};

  \node [transition] at (1,0)  (zloop)  {$\fsym{\fuscomp{z}}$}
  edge [pre,bend right]                     (0)
  edge [post, bend left]                     (0);

  \node [transition] at (3,0.5) (xbar)  {$\fsym{\fuscomp{x}}$}
  edge [post]                     (2)
  edge [pre]                    (0);              

  \node [transition] at (3,0) (x)  {$\fsym{x}$}
  edge [post]                     (1)
  edge [pre]                    (0);              
\end{tikzpicture}
\caption{$twooutextraloop(x,z)$}
\label{fig:cdfg-tm-2h}
\end{subfigure}
\end{tabular}
\begin{subfigure}[t]{0.23\textwidth}
\begin{tikzpicture}[baseline=0.3cm,node distance=1.3cm,>=stealth',bend angle=25,auto]

  %top
  \node[circle,draw=black,radius=1mm] at (2,0)(0) {$v_1$};

  %right nodes
  \node [circle,draw=black] at (3,0.85)(1) {$v_2$};
  \node [circle,draw=black] at (4,0)(2) {$v_3$};
    
  \node [transition] at (3,0) (zbar)  {$\fsym{\fuscomp{x}}$}
  edge [pre]                     (0)
  edge [post]                    (2);

  \node [transition] at (4,0.85)  (y)  {$\fsym{x}$}
  edge [post]                     (2)
  edge [pre]                    (1);

  \node [transition] at (2,0.85) (z)  {$\fsym{x}$}
  edge [pre]                     (0)
  edge [post]                    (1);
               
\end{tikzpicture}
\caption{$tri(x)$}
\label{fig:cdfg-tm-2f}
\end{subfigure}
\begin{subfigure}[t]{0.36\textwidth}
\begin{tikzpicture}[baseline=0.3cm,node distance=1.3cm,>=stealth',bend angle=25,auto]

  %top
  \node[inner sep = 1.5pt,circle,fill=black] at (0,0)(3) {};
  \node[circle,draw=black] at (2,0)(0) {$v$};

  %right nodes
  \node [inner sep = 1.5pt,circle,fill=black] at (3,0.75)(1) {};
  \node [inner sep = 1.5pt,circle,fill=black] at (4,0.2)(2) {};

  \node [inner sep = 1.5pt,circle,fill=black] at (4,-0.25)(4) {};

  \node [transition] at (1,0) (zbar)  {$\fsym{x}$}
  edge [pre]                     (3)
  edge [post]                    (0);
  
  \node [transition] at (1,0.75)  (x)  {$\fsym{\fuscomp{x}}$}
  edge [pre,bend right]                     (0)
  edge [post, bend left]                     (0);
 
  \node [transition] at (3,0.2) (zbar)  {$\fsym{\fuscomp{z}}$}
  edge [pre]                     (0)
  edge [post]                    (2);

  \node [transition] at (3,-0.25) (zbar)  {$\fsym{z}$}
  edge [pre]                     (0)
  edge [post]                    (4);

  \node [transition] at (4,0.75)  (y)  {$\fsym{y}$}
  edge [post]                     (2)
  edge [pre]                    (1);

  \node [transition] at (2,0.75) (z)  {$\fsym{x}$}
  edge [pre]                     (0)
  edge [post]                    (1);
               
\end{tikzpicture}
\caption{$PCloopin(x)$}
\label{fig:cdfg-tm-2i}
\end{subfigure}
\begin{subfigure}[t]{0.36\textwidth}
\begin{tikzpicture}[baseline=0.3cm,node distance=1.3cm,>=stealth',bend angle=25,auto]

  %top
  \node[circle,draw=black] at (2,0)(0) {$v$};

  %right nodes
  \node [inner sep = 1.5pt,circle,fill=black] at (1,0.75)(1) {};
  \node [inner sep = 1.5pt,circle,fill=black] at (0,0.2)(2) {};
  \node [inner sep = 1.5pt,circle,fill=black] at (0,-0.25)(3) {};
  \node [inner sep = 1.5pt,circle,fill=black] at (4,0)(4) {};
  
  \node [transition] at (3,0.75)  (x)  {$\fsym{\fuscomp{x}}$}
  edge [pre,bend right]                     (0)
  edge [post, bend left]                     (0);
  
  \node [transition] at (1,0.2) (zbar)  {$\fsym{\fuscomp{z}}$}
  edge [post]                     (0)
  edge [pre]                    (2);

  \node [transition] at (1,-0.25) (zbar)  {$\fsym{z}$}
  edge [post]                     (0)
  edge [pre]                    (3);

  \node [transition] at (2,0.75)  (y)  {$\fsym{y}$}
  edge [post]                     (0)
  edge [pre]                    (1);

  \node [transition] at (0,0.75) (z)  {$\fsym{z}$}
  edge [pre]                     (2)
  edge [post]                    (1);

  \node [transition] at (3,0) (z)  {$\fsym{x}$}
  edge [pre]                     (0)
  edge [post]                    (4);
\end{tikzpicture}
\caption{$PCloopout(x)$}
\label{fig:cdfg-tm-2j}
\end{subfigure}
\caption{Some contexts for applying rules simulating a step}
\label{Fig:cdfg-TM-contexts}
\end{figure}
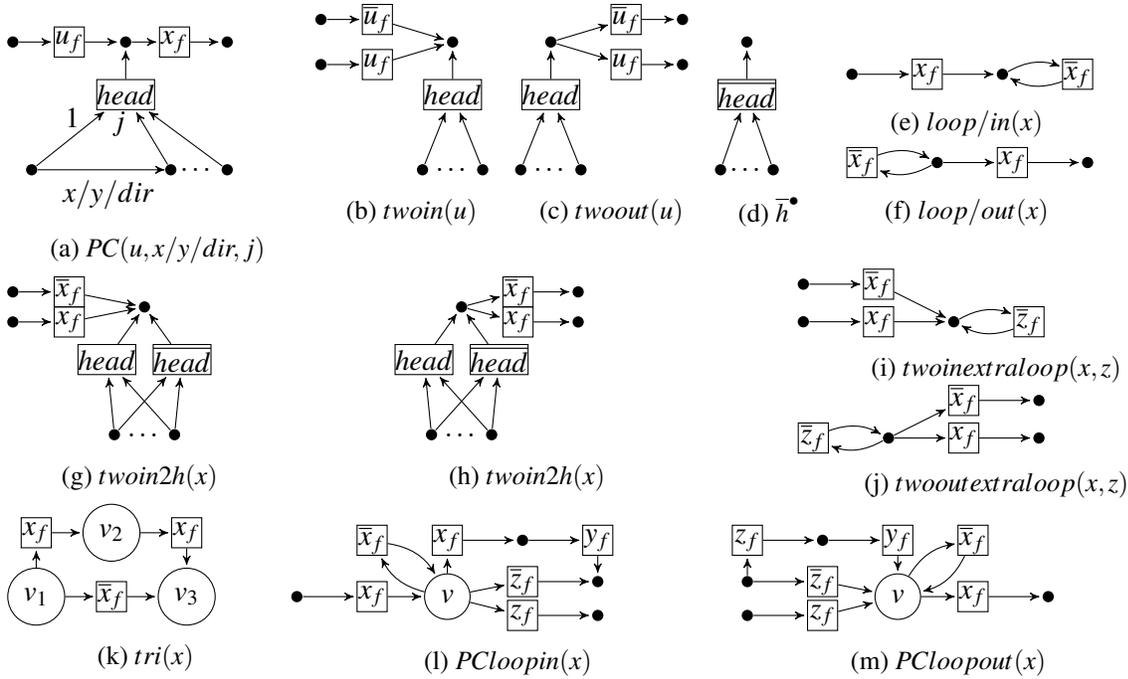

%% Fuse x

%\item
$fuse\_2in(x) = (\fr(x), \{\fr(x) \to twoin(x) \}, \{\fr(x) \to twoin2h(x), \fr(x) \to loop/in(x),\fr(x) \to tri(x) \} \cup \{\fr(x) \to twoinextraloop(x,z) \mid z\in \Gamma \})$,
where $\fr(x) \to tri(x)$ maps the $x$-hyperedge to $e \in E_{tri(x)}$ with $s(e) = v_2$ and $t(e) = v_3$; $\fr(x) \to twoinextraloop(x,z)$ maps the $\fuscomp{x}$-hyperedge to the one above the the $x$-hyperedge (which is relevant for the case $z=x$).
The respective hypergraphs used in the context conditions are depicted in Figure~\ref{Fig:cdfg-TM-contexts}.

%\item
$fuse\_2out(x) = (\fr(x), \{\fr(x) \to twoout(x) \}, \{\fr(x) \to twoout2h(x), \fr(x) \to loop/out(x),\fr(x) \to tri(x) \} \cup \{\fr(x) \to twooutextraloop(x,z) \mid z\in \Gamma \})$
is analog but $\fr(x) \to tri(x)$ maps the $x$-hyperedge to $e \in E_{tri(x)}$ with $s(e) = v_1$ and $t(e) = v_2$.

%\item
$fuse\_loop\_in(x) = (\fr(x), \{ \fr(x) \to PCloopin(x) \}, \emptyset)$,
where $PCloopin(x)$ is depicted in Figure~\ref{fig:cdfg-tm-2i} and
in the positive context condition the $x$-hyperedge matches the one with target $v$ and the $\fuscomp{x}$-hyperedge matches the one with source and target $v$.

%\item
$fuse\_loop\_out(x) = (\fr(x), \{ \fr(x) \to PCloopout(x) \}, \emptyset)$,
is analog but the $x$-hyperedge matches the one with source $v$.
$PCloopout(x)$ is depicted in Figure~\ref{fig:cdfg-tm-2j}.

%\end{itemize}
\end{definition}

\begin{remark}
The rules in the latter subset of $P_\Delta$ are constructed in such a way that the two complementary hyperedges must be attached to the same vertex and that two rules are never applicable at the same time with respect to the same connected component.
This is achieved by defining $fuse\_2in(x)$ and $fuse\_loop\_in(x)$ as well as 
$fuse\_2out(x)$ and $fuse\_loop\_out(x)$ in such a way that the positive context condition of the first is reflected in the negative context condition of the other.
Further, in order to distinguish the applicability of $fuse\_loop\_in(x)$ and $fuse\_loop\_out(x)$ if in-going and out-going edges (wrt some vertex where the loop is attached) are labeled with the same symbol,
the number of out-going (in-going) edges, respectively, is of relevance.
If the number of out-going (in-going) edges is 4, then the loop must be fused with the in-going (out-going) edge, respectively.
$fuse\_loop\_in(x)$ and $fuse\_loop\_out(x)$ do not need negative context conditions
because the positive context does not occur in the $C$-components of Definition~\ref{def:construction-CDFG(pts)-step-simulation}.
\end{remark}

The following example illustrates the simulation of the transition step for $(q_{start}, a,c,r, q_{accept})$.
It works analogously for other transition steps.

\begin{example}
  Consider the two connected components $H = hg(q_{start} q_{aux} q_{accept}, d, ab, ab)$ and $C = C(d,\allowbreak a/c/r, 3)$. Then $H' = hg(q_{accept} q_{aux} q_{start}, dc, b, ab)$ can be derived as illustrated in Figure~\ref{fig:cdfg-tm-example-step-simulation}.
  \begin{figure}[t]
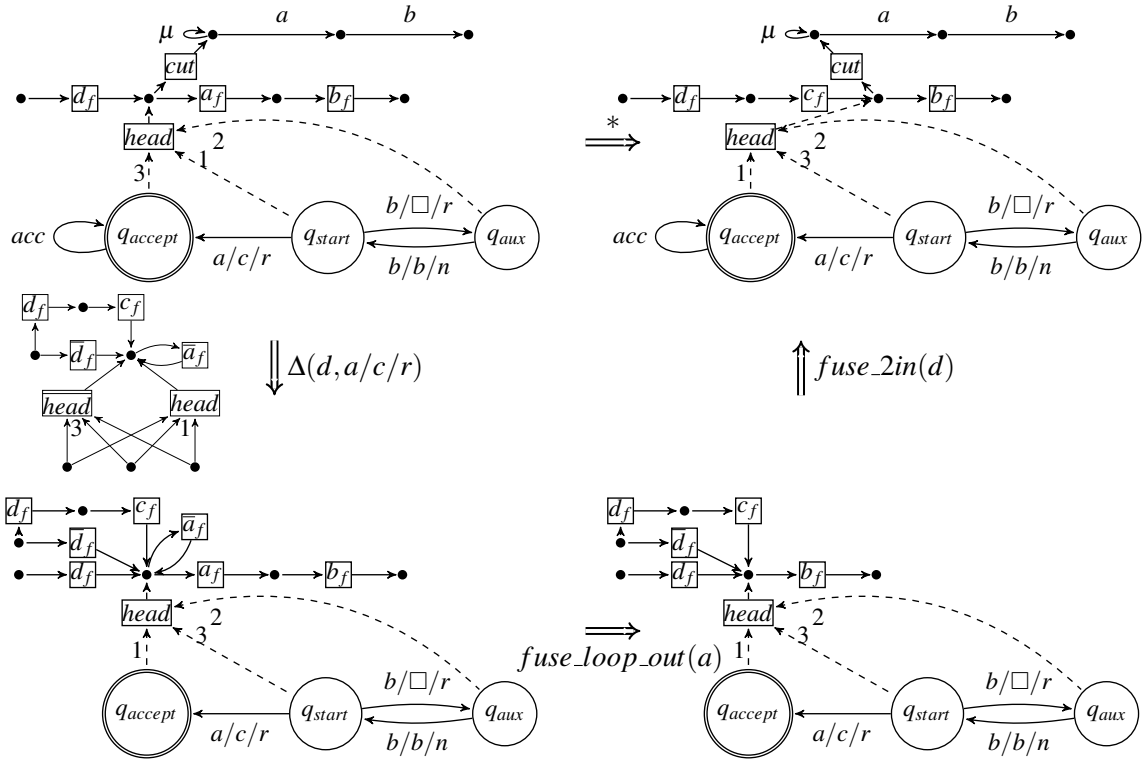

    \centering
     \begin{tikzpicture}
 \node at (-4,0) (init){\scalebox{0.85}{\input{tikz/cdfg-tm-example-simple-with-tape-ext-reduced}}};
 \node at (4,0) (result){\scalebox{0.85}{\input{tikz/cdfg-tm-example-simple-with-tape-ext-2-nc}}};

 \node at (-6,-3.2) (auxcomponent) {\scalebox{0.85}{\input{tikz/cdfg-tm-1-r1-example-2}}};

 \node at (-4,-6.5) (intermediate1){\scalebox{0.85}{\input{tikz/cdfg-tm-example-simple-with-tape-ext-3-nocolor}}};
 \node at (4,-6.5) (intermediate2){\scalebox{0.85}{\input{tikz/cdfg-tm-example-simple-with-tape-ext-4-nocolor}}};

% arr 1 init to result
    \node at (0,0) (arrowstart1){};
    \node at (1,0) (arrowend1){};
     
% arr 2 init to intermediate1
      \node at (-4,-2.5) (arrowstart2){};
      \node at (-4,-3.5) (arrowend2){};

% arr 3 intermediate2 to result
      \node at (3,-3.5) (arrowstart3){};
      \node at (3,-2.5) (arrowend3){};

% arr 4 intermediate1 to intermediate1 
      \node at (0,-6.5) (arrowstart4){};
      \node at (1,-6.5) (arrowend4){};

      \path (arrowstart1) edge[-implies,thick,
        double distance=1.4pt,
        preaction = {decorate},
        postaction = {draw,line width=1.4pt, white,shorten >= 4.5pt}] node[above =1pt]{ $*$}  (arrowend1);
      
      \path (arrowstart2) edge[-implies,thick,
      double distance=1.4pt,
      preaction = {decorate},
      postaction = {draw,line width=1.4pt, white,shorten >= 4.5pt}] node[right =1pt]{ $\Delta(d, a/c/r)$} (arrowend2);

      \path (arrowstart3) edge[-implies,thick,
        double distance=1.4pt,
        preaction = {decorate},
        postaction = {draw,line width=1.4pt, white,shorten >= 4.5pt}] node[right =1pt]{ $fuse\_2in(d)$}  (arrowend3);

\path (arrowstart4) edge[-implies,thick,
   double distance=1.4pt,
   preaction = {decorate},
   postaction = {draw,line width=1.4pt, white,shorten >= 4.5pt}] node[below =1pt]{ $~~fuse\_loop\_out(a)$} (arrowend4);

 \end{tikzpicture}
    \caption{Simulating the transition step for $(q_{start}, a,c,r, q_{accept})$}
\label{fig:cdfg-tm-example-step-simulation}
\end{figure}  
  The context-dependent fusion rule $\Delta(d,a/c/r)$ can be applied by matching the $head$-hyperedge in $H$ and the $\fuscomp{head}$-hyperedge in $C$.
The two complementary hyperedges are fused and due to the $head$-hyperedge in $C$ a new $head$-hyperedge reconstructed; its first source is $q_{accept}$.
Furthermore, the application attaches additional vertices and hyperedges to the tape graph.
The resulting connected component is depicted in the lower left in Figure~\ref{fig:cdfg-tm-example-step-simulation} (the $acc$-loop and the adjunct, consisting of the $cut$-hyperedge and terminal and marker string graph, is omitted in order to clarify the drawing).
The $\fsym{\fuscomp{a}}$-hyperedge is then fused with the out-going $\fsym{a}$-hyperedge by an application of $fuse\_loop\_out(a)$ with the result that the source and the target vertex of the $\fsym{a}$-hyperedge are glued together yielding the connected component depicted in the lower right in Figure~\ref{fig:cdfg-tm-example-step-simulation} (again omitting the $acc$-loop and the adjunct).
Afterwards, $fuse\_2in(d)$ can be applied to the $\fsym{\fuscomp{d}}$- and $\fsym{d}$-hyperedge.
The resulting connected component is $H'$.
\end{example}

We will later prove this one-to-one correspondence of a transition step in $\TM$ and a particular derivation sequence concerning hypergraph representations of respective configurations wrt the same adjunct in the corresponding context-dependent fusion grammar $\CDFG(\TM)$.

\subsection{Construction of a context-dependent fusion grammar corresponding to a Turing machine}

Now we combine the previous constructions in a context-dependent fusion grammar.
In our construction two additional connected components $Acc$ and $\encsg{(\fuscomp{\tmblank})}$ and four additional (context-dependent) fusion rules are needed (1)~for connecting the initial hypergraph representation of the Turing machine with the generated tape graph (as described in Definition~\ref{def:construction-hg_ci}), (2)~for manipulating a hypergraph representation of an accepting configuration such that it indicates that an input string is accepted, (3)~for disconnecting\footnote{Disconnecting $\sg(w)_{\mu}$ uses similar ideas as discussed in Proposition~\ref{prop:cdfg-tape-graph-3} but uses context conditions.} the terminal and marker labeled string graph $\sg(w)_{\mu}$ for some $w \in \Omega^*$, and (4)~for removing a $\tmblank$-symbol from the left on the tape (for technical reasons). 

\begin{definition}\label{def:construction-CDFG(pts)}
  Let $TM = (Q,\Omega, \Gamma, \Delta)$ be a Turing machine.
Let $\CDFG_{\tg}(\Omega, \Gamma) = (Z_{\tg}, F_{\tg}, \{\mu\}, \Omega, P_{\tg})$ be the corresponding context-dependent fusion grammar generating tape graphs (cf. Definition~\ref{def:construction-CDFG-tg}),
let $C(u,\lambda,i)$ be the connected component defined in Definition~\ref{def:construction-CDFG(pts)-step-simulation} and $P_{\Delta}$ be the set of context-dependent fusion rules defined in Definition~\ref{def:construction-CDFG(pts)-step-simulation-rules}.
Then $\CDFG(\TM) = (Z_{\TM}, \{head\} + F_{\tg}, \{\mu\}, \{term\} + \Omega + \Lambda, P_{\TM})$
is the corresponding context-dependent fusion grammar 
with $Z_{\TM}$ and $P_{\TM}$ defined as follows.

$Z_{\TM} =  Z_{\tg} + hg(\TM)_{init} + Acc + \encsg{(\fuscomp{\tmblank})} + \sum\limits_{\substack{u \in \fsym{\Gamma}, \lambda \in \Lambda\\ 0 < i \le |Q|} \ } C(u,\lambda,i)$
where
$hg(\TM)_{init}= hg(\TM,q_{start}\sigma)$ for some arbitrary permutation $\sigma \in Q\setminus \{q_{start}\}$;
and
$Acc = ( \{v_1, \ldots, v_{|Q|+2} \}, \{ term, \fuscomp{head}, cut \},\sE,\tE,\lE)$
where
$\sE(term) = v_1 \cdots v_{|Q|}$, $\tE(term) = v_{|Q|+1}$, $\lE(term) = term$,
$\sE(\fuscomp{head}) = v_1 \cdots v_{|Q|}$, $\tE(\fuscomp{head}) = v_{|Q|+1}$, $\lE(\fuscomp{head}) = \fuscomp{head}$,
$\sE(cut) = v_{|Q|+1}$, $\tE(cut) = v_{|Q|+2}$, $\lE(cut) = cut$.
A schematic drawings of $Acc$ is depicted in Figure~\ref{fig:cdfg-tm-1-acc}.

$P_{\TM} = P_{\tg} \cup P_\Delta \cup \{  \fr(tape), accept, cut, shrink \}$
where $accept, cut$ and $shrink$ are defined as follows:
  
  $accept = (\fr(head), \{ \fr(head) \to PC_{acc} + Acc \}, \emptyset)$,
where $PC_{acc} = (Q + \{v_{tape}\},\{head,acc\},\sE,\tE,\lE)$
with $\sE(head) = q_{accept}\sigma'$, $\tE(head) = v_{tape}$, $\lE(head) = head$,
$\sE(acc) = \tE(acc) = q_{accept}$, and $\lE(acc) = acc$,
where $\sigma'$ are the states of $Q\setminus \{q_{accept}\}$ in arbitrary order.
A schematic drawings of $PC_{acc}$ is depicted in Figure~\ref{fig:cdfg-tm-1-pcacc}.

$cut = (\fr(cut), \{ \fr(cut) \to (\{v_1,v_2,v_3 \}, \{cut,\fuscomp{cut}\}, \sE,\tE, \lE)\} , \emptyset)$, where
$\sE(cut) = \sE(\fuscomp{cut}) = v_1, \tE(cut) = v_2, \tE(\fuscomp{cut}) = v_3$ and $\lE(cut) = cut,\lE(\fuscomp{cut}) = \fuscomp{cut}$, i.e., the source of the two complementary hyperedges must be the same and the targets different. The connected component is depicted in Figure~\ref{fig:cdfg-tm-1-pccut}.

$shrink = (\fr(\tmblank), \{\fr(\tmblank) \to \begin{tikzpicture}[baseline=-0.1cm,node distance=1.3cm,>=stealth',bend angle=25,auto]

  \node [inner sep = 1.5pt,circle,fill=black] at (0,0)(0) {};
  \node [inner sep = 1.5pt,circle,fill=black] at (1.5,0)(1) {};

  \node [transition] at (0.75,0)  (blank)  {$\tmblank$}
  edge [pre]                     (0)
  edge [post]                     (1);

  \node [transition] at (-0.7,0) (start)  {$\tmtapestart$}
  edge [post]                     (0);

\end{tikzpicture}
~~~~~
\begin{tikzpicture}[baseline=-0.1cm,node distance=1.3cm,>=stealth',bend angle=25,auto]

  \node [inner sep = 1.5pt,circle,fill=black] at (0,0)(0) {};
  \node [inner sep = 1.5pt,circle,fill=black] at (1.5,0)(1) {};

  \node [transition] at (0.75,0)  (blank)  {$\fuscomp{\tmblank}$}
  edge [pre]                     (0)
  edge [post]                     (1);

  \node [transition] at (2.25,0) (start)  {$\tmtapestart$}
  edge [post]                     (1);

\end{tikzpicture}\}, \emptyset)$.

\end{definition}

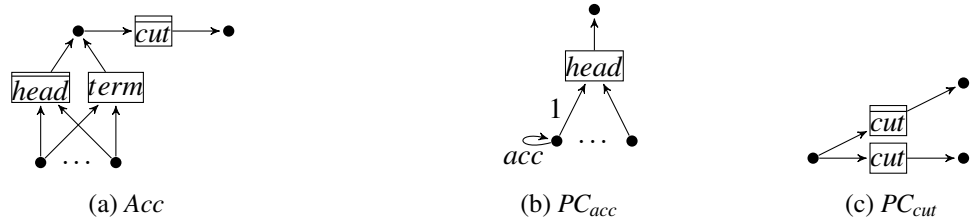
\begin{figure}[t]
\centering
\begin{subfigure}[b]{0.4\textwidth}
\centering
\begin{tikzpicture}[baseline=0.3cm,node distance=1.3cm,>=stealth',bend angle=25,auto]

  %bottom
  \node [inner sep = 1.5pt,circle,fill=black] at (0,0)(s1) {};
  \node [inner sep = 1.5pt,circle,fill=black] at (1,0)(s4) {};

  %top
  \node [inner sep = 1.5pt,circle,fill=black] at (0.5,1.75)(0) {};

  %right node
  \node [inner sep = 1.5pt,circle,fill=black] at (2.5,1.75)(2) {};
  
  \node at (0.5,0) {$\dots$};    

  \node [transition] at (0,1)  (t1)  {$\fuscomp{head}$}
  edge [pre] (s1)
  edge [pre]                     (s4)
  edge [post]                    (0);

  \node[transition] at (1,1)  (t1)  {$term$}
  edge [pre]                     (s1)
  edge [pre]                     (s4)
  edge [post]                    (0);

  \node [transition] at (1.5,1.75) (cut)  {$\fuscomp{cut}$}
  edge [pre]                     (0)
  edge [post]                    (2);
               
\end{tikzpicture}
\caption{$Acc$}
\label{fig:cdfg-tm-1-acc}
\end{subfigure}~~
\begin{subfigure}[b]{0.3\textwidth}
\centering
\begin{tikzpicture}[baseline=0.3cm,node distance=1.3cm,>=stealth',bend angle=25,auto]

  %bottom
  \node [inner sep = 1.5pt,circle,fill=black] at (0,0)(s1) {};
  \node [inner sep = 1.5pt,circle,fill=black] at (1,0)(s4) {};

  \path (s1) edge[loop left]  node[below] {$acc$} (s1);
  
  %top
  \node [inner sep = 1.5pt,circle,fill=black] at (0.5,1.75)(0) {};
  
  \node at (0.5,0) {$\dots$};    

  \node [transition] at (0.5,1)  (t1)  {$head$}
  edge [pre] node[left] {$1$} (s1)
  edge [pre]                     (s4)
  edge [post]                    (0);
               
\end{tikzpicture}
\caption{$PC_{acc}$}
\label{fig:cdfg-tm-1-pcacc}
\end{subfigure}~~
\begin{subfigure}[b]{0.2\textwidth}
\centering
\begin{tikzpicture}[baseline=0.0cm,node distance=1.3cm,>=stealth',bend angle=45,auto]

  \node [inner sep = 1.5pt,circle,fill=black] at (0,0)(0) {};    
  \node [inner sep = 1.5pt,circle,fill=black] at (2,0)(1) {};
  \node [inner sep = 1.5pt,circle,fill=black] at (2,1)(2) {};

  \node [transition] at (1,0)  (t1)  {$cut$}
  edge [pre]                    (0)
  edge [post]                   (1);
  \node [transition] at (1,0.5)  (t3)  {$\fuscomp{cut}$}
  edge [pre]                    (0)
  edge [post]                   (2);
  
\end{tikzpicture}
\caption{$PC_{cut}$}
\label{fig:cdfg-tm-1-pccut}
\end{subfigure}
\caption{Schematic drawings of connected components used for acceptance and for disconnecting the terminal and marker labeled string graph.}
\label{Fig:cdfg-TM-components-acc}
\end{figure}

\begin{example}
Reconsidering the hypergraph $H' = hg(q_{accept} q_{aux} q_{start}, dc, b, ab)$ depicted in the upper right in Figure~\ref{fig:cdfg-tm-example-step-simulation}.
Then it is easy to see that
$H' + Acc \dder_{accept} H'' \dder_{cut} H'''$
where $H''$ and $H'''$ are depicted in Figure~\ref{Fig:cdfg-tm-example-accepting}.
Note that, the upper part of $H'''$ is terminal and marker labeled only.
\begin{figure}
\begin{subfigure}[b]{0.45\textwidth}
\scalebox{0.8}{\begin{tikzpicture}[->,>=stealth',shorten >=1pt,auto,node distance=2.8cm,
                    semithick]
  \tikzstyle{every state}=[draw=black,text=black]

    \foreach \x in {-1,1,3,5}{
    \node [inner sep = 1.5pt,circle,fill=black] at (\x,0)(\x) {};
  }
  \foreach \x in {2,4,6}{
     \node [inner sep = 1.5pt,circle,fill=black] at (\x,1)(\x) {};
  }
 
  \node [transition] at (2.5,0.5)  (t2)  {$cut$}
  edge [pre]                    (3)
  edge [post]                   (2);
  \path (2) edge [loop left] node{$\mu$}();

  \node [inner sep = 1.5pt,circle,fill=black] at (4.5,0.5)(cutout) {};

  \node [transition] at (3.5,0.5)  (t2)  {$\fuscomp{cut}$}
  edge [pre]                    (3)
  edge [post]                   (cutout);

  \foreach \x/\z [evaluate=\x as \y using \x+2] in {2/a,4/b}{
    \path (\x) edge[->,shorten >= 5pt] node[above=.06]{$\z$} (\y);
  }
  \foreach \x/\z [evaluate=\x as \y using \x+2,evaluate=\x as \p using \x+1] in {-1/\fsym{d},1/\fsym{c},3/\fsym{b}}{    
    \node [transition] at (\p,0)  (t\x)  {$\z$}
      edge [pre]                    (\x)
      edge [post,shorten >= 5pt]                   (\y);
  }

  \node[state,accepting]   (acc)  [below = 1.4 of 1]      {$q_{accept}$};

  \path (acc) edge[loop left]  node[left] {$acc$} (acc);

  \node[state] [right of=acc] (start)                    {$q_{start}$};
  \node[state]         (q1) [right of=start] {$q_{aux}$};

  \node [transition,dashed] [below = 0.3 of 1]  (head)  {$term$}
  edge [pre,dashed] node [right = 0.01, very near start] {$3$}                    (start)
  edge [pre,bend left=28,dashed] node [below = 0.01, very near start] {$2$}                  (q1)
  edge [pre,dashed] node [below = 0.01, very near start] {$1~~~$}                  (acc)
  edge [post,dashed]                    (3);

  \path (start) edge[bend left=8]              node {$b/\tmblank/r$} (q1);
  \path (q1) edge[bend left=8]              node {$b/b/n$} (start);
  \path (start) edge              node {$a/c/r$} (acc);

\end{tikzpicture}}
\caption{$H''$: hypergraph representation of a terminated Turing machine after applying $accept$}
\label{}
\end{subfigure}~~~~
\begin{subfigure}[b]{0.45\textwidth}
\scalebox{0.8}{\begin{tikzpicture}[->,>=stealth',shorten >=1pt,auto,node distance=2.8cm,
                    semithick]
  \tikzstyle{every state}=[draw=black,text=black]

    \foreach \x in {-1,1,3,5}{
    \node [inner sep = 1.5pt,circle,fill=black] at (\x,0)(\x) {};
  }
  \foreach \x in {2,4,6}{
     \node [inner sep = 1.5pt,circle,fill=black] at (\x,1)(\x) {};
  }
 
  \path (2) edge [loop left] node{ $\mu$}();
  
  \foreach \x/\z [evaluate=\x as \y using \x+2] in {2/a,4/b}{
    \path (\x) edge[->,shorten >= 5pt] node[above=.06]{$\z$} (\y);
  }
  \foreach \x/\z [evaluate=\x as \y using \x+2,evaluate=\x as \p using \x+1] in {-1/\fsym{d},1/\fsym{c},3/\fsym{b}}{    
    \node [transition] at (\p,0)  (t\x)  {$\z$}
      edge [pre]                    (\x)
      edge [post,shorten >= 5pt]                   (\y);
  }

  \node[state,accepting]   (acc)  [below = 1.4 of 1]      {$q_{accept}$};
  \path (acc) edge[loop left]  node[left] {$acc$} (acc);

  \node[state] [right of=acc] (start)                    {$q_{start}$};
  \node[state]         (q1) [right of=start] {$q_{aux}$};

  \node [transition,dashed] [below = 0.3 of 1]  (head)  {$term$}
  edge [pre,dashed] node [right = 0.01, very near start] {$3$}                    (start)
  edge [pre,bend left=28,dashed] node [below = 0.01, very near start] {$2$}                  (q1)
  edge [pre,dashed] node [below = 0.01, very near start] {$1~~~$}                  (acc)
  edge [post,dashed]                    (3);

  \path (start) edge[bend left=8]              node {$b/\tmblank/r$} (q1);
  \path (q1) edge[bend left=8]              node {$b/b/n$} (start);
  \path (start) edge              node {$a/c/r$} (acc);

\end{tikzpicture}}
\caption{$H'''$: two connected components obtained by applying $cut$ to $H''$ }
\label{}
\end{subfigure}
\caption{Connected components obtained after the application of $accept$ and $cut$.}
\label{Fig:cdfg-tm-example-accepting}
\end{figure}
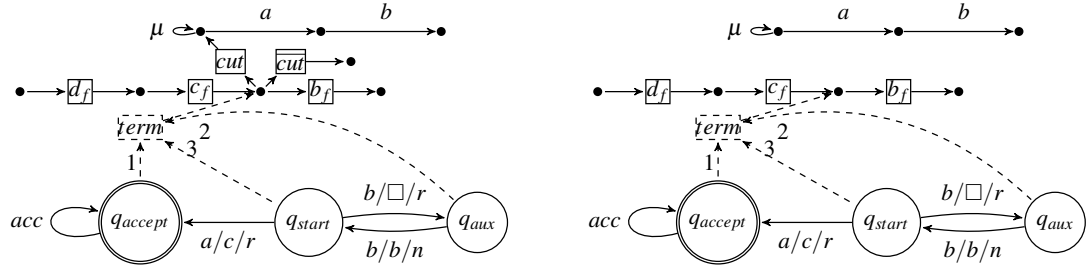
\end{example}

%\subsection{Main theorem}

Our main theorem is that the language recognized by some Turing machine and the generated language of the corresponding context-dependent fusion grammar coincide up to representation of strings as graphs.

\begin{theorem}\label{thm:cdfg-tm}
$L(\CDFG(\TM)) = \{ \sg(w) \mid w \in L(\TM)\}$.
\end{theorem}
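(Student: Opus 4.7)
The plan is to prove both inclusions of the set equality separately, each resting on a simulation lemma that puts applications of the context-dependent fusion rules into one-to-one correspondence with steps of $\TM$.

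For $\supseteq$: given $w\in L(\TM)$, fix an accepting computation $c_0\tmstep{\TM}\cdots\tmstep{\TM}c_n$ with $c_0=(q_{start},\varepsilon,w)$ and $c_n=(q_{accept},\alpha,\beta)$. Choose $i,j\in\N$ large enough that every $c_k$ stays within the tape region $\tmblank^i\,\cdot\,\cdot\,\tmblank^j$. By Proposition~\ref{prop:cdfg-tape-graph-1} one can derive $\tg(\tmblank^i, w\tmblank^j, w)_{tape}$ inside $\CDFG(\TM)$, and by applying $\fr(tape)$ together with $hg(\TM)_{init}$ one obtains a hypergraph representing $c_0$ (up to the extra blanks, which may be trimmed with $shrink$ when the head sits on the leftmost cell). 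The heart of the argument is then the following simulation lemma: for every step $c\tmstep{\TM}c'$ via $\delta=(p,x,y,dir,q)\in\Delta$ and every permutation $\sigma$ of $Q$ beginning with $p$, the hypergraph $hg(\sigma,c,w)$ together with the appropriate connected component $C(u,x/y/dir,j_q)$ (where $j_q$ is the position of $q$ in $\sigma$ and $u$ is the symbol neighbouring the head needed by the case $dir\in\{l,n,r\}$) can be derived in $\CDFG(\TM)$ to $hg(\sigma',c',w)$ plus residue without markers, using one application of $\Delta(u,x/y/dir)$ followed by a bounded number of applications of the $fuse\_2in(\cdot), fuse\_2out(\cdot), fuse\_loop\_in(\cdot)$, $fuse\_loop\_out(\cdot)$ rules, possibly preceded by $\fr(\tmtapestart)$ or $\fr(\tmtapeend)$ to supply fresh blanks when the head falls off the tape end.

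Iterating the simulation lemma $n$ times yields a derivation that contains the connected component $hg(\sigma_n,c_n,w)$. Since $c_n$ has state $q_{accept}$, the first source of the $head$-hyperedge is attached to $q_{accept}$, which still carries its $acc$-loop inherited from $hg(\TM)_{init}$; thus the positive context $PC_{acc}$ is matched and the rule $accept$ can be applied with a fresh copy of $Acc$. This rewrites the $head$-hyperedge to a $term$-hyperedge and introduces a $\fuscomp{cut}$-hyperedge attached at the tape vertex. A subsequent application of $cut$ disconnects the terminal-and-marker component $\sg(w)_\mu$ from the rest. Removing the $\mu$-loop by $rem_{\{\mu\}}$ then contributes $\sg(w)$ to $L(\CDFG(\TM))$.

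For $\subseteq$: let $G\in L(\CDFG(\TM))$, so $G=rem_{\{\mu\}}(Y)$ for some connected component $Y$ of a hypergraph derivable from $Z_\TM$, with $Y$ carrying at least one $\mu$ and no fusion symbols. The only source of the marker $\mu$ in $Z_\TM$ is the component $tape_{start}$, so $Y$ descends from a copy of it. For $Y$ to be free of fusion symbols, the attached $cut$-hyperedge must have been consumed, which is possible only through the rule $cut$ paired with a complementary $\fuscomp{cut}$-hyperedge. The only copy of $\fuscomp{cut}$ available anywhere in a derivation is the one inside $Acc$, and $Acc$ can be attached to the current configuration only through $accept$, whose positive context $PC_{acc}$ forces the first source of the $head$-hyperedge to be a vertex bearing an $acc$-loop, namely $q_{accept}$. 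Thus any component contributing to $L(\CDFG(\TM))$ arises from an intermediate hypergraph representing an accepting configuration wrt some adjunct $w\in\Omega^*$, and one concludes $Y=\sg(w)_\mu$, hence $G=\sg(w)$. It remains to show that the configuration reached is genuinely reachable by $\TM$ on input $w$: this is established by induction on derivation length, showing that every derivable hypergraph in which a $head$-hyperedge is attached to a tape-shaped subgraph represents, up to multiplicities of unused components and redundant blank extensions, a configuration reachable by $\TM$ on the input encoded in the terminal string-graph adjunct.

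The main obstacle is the soundness half of this inductive invariant: one must verify that the negative context conditions of $fuse\_2in$, $fuse\_2out$, $fuse\_loop\_in$, $fuse\_loop\_out$, together with the positive and negative contexts carried by each $\Delta(u,\lambda)$ and the shape constraints of the components $C(u,\lambda,i)$, jointly rule out every illegitimate rule application, including spurious fusions between unrelated components produced by multiplication, context-free fusions internal to a single component (as illustrated by Example~\ref{example:trans(PN)-3}), and partial simulations of transitions that would leave dangling complementary $head$- or $\fsym{\Gamma}$-symbols. This amounts to a finite but delicate case analysis parameterised by $dir\in\{l,n,r\}$ and the local neighbourhood of the head on the tape graph; in each case the claim is that the unique fusion sequence yielding a connected component of the prescribed shape is the intended simulation of the corresponding transition in $\Delta$.
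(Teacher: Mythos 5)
Your proposal follows essentially the same route as the paper: a step-simulation lemma pairing each transition $\delta=(p,x,y,dir,q)$ with one application of $\Delta(u,x/y/dir)$ followed by a bounded sequence of $fuse\_2in/fuse\_2out/fuse\_loop\_in/fuse\_loop\_out$ applications (the paper's Lemma~\ref{lemma:cdfg-tm-ci-step}), iterated over the whole computation (Lemma~\ref{lemma:cdfg-tm-computation}), with $accept$ and $cut$ releasing $\sg(w)_\mu$, and a converse that traces the marker, the unique $\fuscomp{cut}$ inside $Acc$, and the $acc$-loop back to an accepting configuration before normalizing the derivation. The ``delicate case analysis'' you defer is precisely where the paper's own proof spends its remaining effort (its enumerated sequential-dependency and context-condition restrictions), so nothing essential is missing beyond writing those cases out; your pre-allocation of blanks up front (rather than supplying them on demand) matches the normal form the paper adopts in its converse direction.
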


The proof is based on the following lemmata.

\begin{lemma}\label{lemma:cdfg-tm-ci-step}
Let $c_{i} = (p, \alpha, \beta), c_{i+1} = (q, \alpha', \beta') \in \mathit{conf}(\TM), \delta = (p,x,y,dir,q) \in \Delta$, $w \in \Omega^*$, and let
$pr(s(head), k) = q$ for the $head$-labeled hyperedge in $H = hg(p q_2 \cdots q \cdots q_{|Q|}, \alpha, \beta, w)$.
Then
$ c_{i} \tmstep{\TM} c_{i+1} \text{ wrt } \delta \text{ if and only if } H + C(u,x/y/dir, k) + tape_{\tmtapestart} + tape_{\tmtapeend} + \encsg{(\fuscomp{\tmblank})}\dder^* H'$,
where $u$ is the last symbol of $\alpha$ if $\alpha \ne \varepsilon$ and $\tmblank$ otherwise, and $H' =  hg(q q_2 \cdots p \cdots q_{|Q|}, \alpha', \beta', w)$.
\end{lemma}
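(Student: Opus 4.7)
The plan is to prove both implications by case analysis on $dir\in\{l,n,r\}$ and on whether $\alpha$ or $\beta$ is empty, working out the generic case $dir=l$ with non-empty $\alpha$ in detail and indicating the modifications needed for the remaining five sub-cases of $\tmstep{\TM}$ in Definition~\ref{def:TM}. For the forward direction ($\Rightarrow$), assume $c_i\tmstep{\TM}c_{i+1}$ via $\delta=(p,x,y,dir,q)\in\Delta$. If $\alpha=\varepsilon$ or $\beta=\varepsilon$, I would first extend the tape of $H$ by applying $\fr(\tmtapestart)$ together with $tape_{\tmtapestart}$, or $\fr(\tmtapeend)$ together with $tape_{\tmtapeend}$, so that the symbol $u$ to the immediate left of the head is actually present as a hyperedge (equal to $\tmblank$ if no true left neighbour existed). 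I would then apply the context-dependent fusion rule $\Delta(u,x/y/dir)$ that matches the unique $head$-hyperedge of $H$ with the $\fuscomp{head}$-hyperedge of $C(u,x/y/dir,k)$: its positive context $PC(u,x/y/dir,k)+C(u,x/y/dir,k)$ is satisfied since $\delta\in\Delta$ and the tape of $H$ displays $\fsym{u}$ immediately to the left of $\fsym{x}$ at the head vertex, while the negative contexts are vacuously satisfied because $H$ contains no $twoin$- or $twoout$-patterns adjacent to its head. The fusion produces a hypergraph whose new $head$-hyperedge, inherited from $C$, has sources $qq_2\cdots p\cdots q_{|Q|}$ and whose head vertex is surrounded by the auxiliary $\fsym{u}$, $\fsym{\fuscomp{u}}$, $\fsym{\fuscomp{x}}$, $\fsym{y}$ hyperedges prescribed by $C(u,x/y/dir,k)$. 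A bounded sequence of applications of $fuse\_2in(z)$, $fuse\_2out(z)$, $fuse\_loop\_in(z)$, and $fuse\_loop\_out(z)$ (with $z\in\{u,x\}$, the choice dictated by $dir$ and by which of $u,x$ coincides with an already adjacent tape symbol) then deletes these auxiliaries and produces the tape graph of $c_{i+1}$; if a leftmost $\tmblank$ introduced along the way is now superfluous, one application of $shrink$ with $\encsg{(\fuscomp{\tmblank})}$ removes it. The result is exactly $H'$.

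For the backward direction ($\Leftarrow$), any derivation terminating at the connected component $H'$ must at some point fuse its $head$- and $\fuscomp{head}$-hyperedges, which is only possible via a rule of the form $\Delta(u',\lambda')$; its positive context then forces $u'=u$ (the actual left neighbour on the tape), $\lambda'=x/y/dir$, $\delta\in\Delta$, and identifies $k$ as the swap position. The remark following Definition~\ref{def:construction-CDFG(pts)-step-simulation-rules}, combined with the carefully paired positive and negative contexts of $fuse\_2in$, $fuse\_2out$, $fuse\_loop\_in$, $fuse\_loop\_out$, ensures that after this initial fusion each auxiliary edge admits exactly one applicable rule, and that the applicable rule cannot match elsewhere in the hypergraph because its positive context demands a neighbourhood that occurs only at the head vertex. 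The derivation is therefore essentially unique, and running it to completion can only yield $H'$ if the step $c_i\tmstep{\TM}c_{i+1}$ was legal in the first place. Completeness of the choice of ``cleanup'' rules for each $dir\in\{l,n,r\}$ is by the same case distinction as in the forward direction.

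The main obstacle is the exhaustive verification that no parasitic rule applications interfere with this intended derivation. One has in particular to exclude applications of $fuse\_loop\_in(u)$ or $fuse\_2in(u)$ at some distant tape vertex that happens to carry a $\fsym{u}$-hyperedge. This is settled by checking that the positive contexts $PCloopin(u)$, $PCloopout(u)$, $twoin(u)$, $twoout(u)$ require a specific surrounding pattern of complementary edges which, by construction of $C(u,x/y/dir,k)$ and of the tape graph, arises only at the current head vertex. A parallel bookkeeping argument handles the optional applications of $\fr(\tmtapestart)$, $\fr(\tmtapeend)$, and $shrink$ in the boundary cases $\alpha=\varepsilon$ or $\beta=\varepsilon$. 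Once this local uniqueness is established for each of the six sub-cases of $\tmstep{\TM}$, the lemma follows.
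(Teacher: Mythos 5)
Your proposal is correct and follows essentially the same route as the paper's proof: a six-way case analysis on $dir$ and the emptiness of $\alpha,\beta$, with the forward direction realized by $\Delta(u,x/y/dir)$ followed by the appropriate $fuse\_loop$/$fuse\_2$ cleanup (plus $\fr(\tmtapestart)$, $\fr(\tmtapeend)$, $shrink$ in the boundary cases), and the converse obtained from the fact that applicability of $\Delta(u,x/y/dir)$ forces the corresponding transition to exist and the context conditions to make the cleanup essentially deterministic. Your discussion of excluding parasitic matches is somewhat more explicit than the paper's, but the underlying argument is the same.
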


\begin{proof}
If $c_i = c_{i+1}$, then the one-to-one correspondence trivially holds.
%\begin{figure}[t]
%\centering
%\begin{subfigure}[b]{0.32\textwidth}
%\input{tikz/cdfg-tm-0-l}
%\caption{$hg(u,x/y/l)$ wrt $v$}
%\label{fig:cdfg-tm-0l}
%\end{subfigure}
%\begin{subfigure}[b]{0.27\textwidth}
%\input{tikz/cdfg-tm-0-n}
%\caption{$hg(-,x/y/n)$ wrt $v$}
%\label{fig:cdfg-tm-0n}
%\end{subfigure}
%\begin{subfigure}[b]{0.32\textwidth}
%\input{tikz/cdfg-tm-0-r}
%\caption{$hg(u,x/y/r)$ wrt $v$}
%\label{fig:cdfg-tm-0r}
%\end{subfigure}
%\caption{Extensions for tape manipulation and head movement}
%\label{Fig:cdfg-TM-tape-manipulation-movement}
%\end{figure}
Assume that $c_{i} \tmstep{\TM} c_{i+1} \ne c_i$ wrt $\delta$ holds. We distinguish the following six cases.

Case 1) $(p, \alpha u, x \beta) \tmstep{\TM} (q, \alpha, u y \beta) \text{ wrt } \delta = (p,x,y,l,q)$.
Then we can restrict the hypergraph to $H + C(u,x/y/l, k)$ (by multiplying unneeded connected components by 0) and apply the derivation
$H + C(u,x/y/l, k) \dder_{\Delta(u,x/y/l)} X \dder_{fuse\_loop\_in(u)} X' \dder_{fuse\_2out(x)} H'$
due to the following reasoning.
Because $u$ is the symbol left of $x$ and $pr(s(head), k) = q$
the rule $\Delta(u,x/y/dir)$ can be applied matching the $head$-hyperedge in $H$ and the $\fuscomp{head}$-hyperedge in $C(u,x/y/dir, k)$ yielding the connected component $X$ which is $H$ extended by the respective hypergraphs $hg(u,x/y/l)$, where $v \in V_{hg(u,x/y/l)}$ and $v_{tape} \in V_{H}$ are identified.
Because of the additional edges $fuse\_loop\_in(u)$ becomes applicable and the derivation $X \dder_{fuse\_loop\_in(u)} X'$ fuses the previously added $\fuscomp{u}$-loop and the in-going $u$-edge in $X$.
Afterwards, $fuse\_2out(x)$ is applicable and the derivation $X' \dder_{fuse\_2out(x)} H'$ yields the requested hypergraph due to the fact that the two out-going $\fsym{x}$- and $\fsym{\fuscomp{x}}$-hyperedges are fused. 

The other cases use similar arguments and are therefore stated less explicit.

Case 2) $(p, \varepsilon, x \beta)  \tmstep{\TM} (q, \varepsilon, \tmblank y \beta) \text{ wrt } \delta = (p,x,y,l,q)$.
Then we can restrict the hypergraph to the three connected components $H + C(\tmblank,x/y/l, k) + tape_{\tmtapestart}$ and apply the derivation
$H + C(\tmblank,x/y/l, k) +  tape_{\tmtapestart} \dder_{\fr(\tmtapestart)} \tilde{H} + C(\tmblank,x/y/l, k) \dder_{\Delta(\tmblank,x/y/l)} X \dder_{fuse\_loop\_in(\tmblank)} X' \dder_{fuse\_2out(x)} H'$,
where $\tilde{H}$ is $H$ but $\alpha = \tmblank$ instead of $\varepsilon$;
afterwards the same reasoning as the previous case is applied.

Case 3) $(p, \alpha, \varepsilon)  \tmstep{\TM} (q, \alpha, y) \text{ wrt } \delta = (p,\tmblank,y,l,q)$. In the subcase $\alpha \ne \varepsilon$ we have the derivation
$H + C(\tmblank,x/y/l, k) + tape_{\tmtapeend} \dder_{\fr(\tmtapeend)} \tilde{H} + C(\tmblank,\tmblank/y/l, k) \dder_{\Delta(\tmblank,x/y/l)} X \dder_{fuse\_loop\_in(\tmblank)} X' \dder_{fuse\_2out(x)} H'$
similar to the previous case
and in the subcase $\alpha = \varepsilon$ all five connected components are needed and the derivation of the previous subcase is prepended by $\dder_{\fr(\tmtapestart)}$ and appended by $\dder_{shrink}$.

Case 4) $(p, \alpha, x \beta)  \tmstep{\TM} (q, \alpha, y \beta) \text{ wrt } \delta = (p,x,y,n,q)$. Then we have the derivation
$H + C(u,x/y/n, k)\allowbreak \dder_{\Delta(u,x/y/n)} X \dder_{fuse\_2out(x)} H'$.

Case 5) $(p, \alpha, x \beta)  \tmstep{\TM} (q, \alpha y, \beta) \text{ wrt } \delta = (p,x,y,r,q)$.
Subcase $\alpha \ne \varepsilon$ is analog to Case 1 yielding the derivation
$H + C(u,x/y/r, k) \dder_{\Delta(u,x/y/r)} X \dder_{fuse\_loop\_out(u)} X' \dder_{fuse\_2in(x)} H'$
and subcase $\alpha = \varepsilon$ is analog to the respective subcase in Case 3.

Case 6) $(p, \alpha, \varepsilon)  \tmstep{\TM} (q, \alpha y, \varepsilon) \text{ wrt } \delta = (p,\tmblank,y,r,q)$ is analog to Case 3 wrt both cases.

Conversely, given $H + C(u,x/y/dir, k) + tape_{\tmtapestart} + tape_{\tmtapeend} + \encsg{(\fuscomp{\tmblank})} \dder^* H' \ne H$.
This derivation can be reduced to the six cases above.
The applicability of $\Delta(u,x/y/dir)$ to $H + C(u,x/y/dir, k)$
implies that there exists an $x/y/dir$-edge between some vertices $p,q \in V_{H}$ and that the current symbol read is $x$, where $pr(s(head), 1) = p$ and $pr(s(head), k) = q$ for $head \in E_{H}, \lE(head) = head$.
Furthermore, the construction of $\Delta(u,x/y/dir)$ and $C(u,x/y/dir, k)$ gives that $pr(s(head), 0) = q$ for the reconstructed hyperedge  $head \in E_{H'}$.
This implies $c_{i} \tmstep{\TM} c_{i+1}$ wrt $\delta = (p,x,y,dir,q)$.
%\qed
\end{proof}

\begin{lemma}\label{lemma:cdfg-tm-computation}
$c_0 \tmstep{\TM}^k c_{k}  = (q_k, \alpha, \beta)$ wrt input $w$ implies $Z \dder^* Z + hg(q_k \sigma, \alpha,\beta, w)$ for some $\sigma \in Q \setminus \{q_k\}$.
\end{lemma}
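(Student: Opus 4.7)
The plan is to induct on $k$ and combine Proposition~\ref{prop:cdfg-tape-graph-1} with Lemma~\ref{lemma:cdfg-tm-ci-step}, interleaving suitable multiplications of $Z$ so that an intact copy of $Z$ is kept available throughout the derivation.

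For the base case $k=0$, the start configuration is $c_0 = (q_{start},\varepsilon,w)$. First multiply every connected component of $Z$ by $2$, obtaining $Z+Z$. Using one copy, Proposition~\ref{prop:cdfg-tape-graph-1} (specialised to $i=j=0$) supplies a derivation producing $\tg(\varepsilon,w,w)_{tape}$ from the $Z_{\tg}$-subcomponents. Since $hg(\TM)_{init} = hg(\TM,q_{start}\sigma)$ is also present in that copy, a single application of $\fr(tape)$ carries out exactly the fusion of Definition~\ref{def:construction-hg_ci}, yielding $hg(q_{start}\sigma,\varepsilon,w,w)$; together with the untouched second copy of $Z$ this gives $Z + hg(q_{start}\sigma,\varepsilon,w,w)$ as required.

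For the inductive step, assume $c_0 \tmstep{\TM}^k c_k$ gives $Z \dder^* Z + hg(q_k\sigma,\alpha,\beta,w)$. Let $\delta = (q_k,x,y,dir,q_{k+1}) \in \Delta$ be the transition applied in the $(k{+}1)$-th step, and let $j$ be the source index with $pr(s(head),j) = q_{k+1}$ in $hg(q_k\sigma,\alpha,\beta,w)$. Multiply $Z$ again so that a fresh copy of $C(u,x/y/dir,j)$, $tape_{\tmtapestart}$, $tape_{\tmtapeend}$ and $\encsg{(\fuscomp{\tmblank})}$ becomes available (all of these are connected components of $Z_{\TM}$ by Definition~\ref{def:construction-CDFG(pts)}). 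Lemma~\ref{lemma:cdfg-tm-ci-step} then provides a derivation transforming $hg(q_k\sigma,\alpha,\beta,w)$ together with these auxiliary components into $hg(q_{k+1}\sigma',\alpha',\beta',w)$, where $\sigma'$ is the permutation obtained from $q_k\sigma$ by swapping the entries at positions $1$ and $j$, so $\sigma' \in Q \setminus \{q_{k+1}\}$. The untouched copy of $Z$ persists, completing the induction.

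The main obstacle is the bookkeeping around multiplications: at every induction step one must multiply precisely those components of $Z$ that the subsequent lemma will consume, while still preserving a complete copy of $Z$, so that the derived hypergraph always has the form $Z + (\text{current configuration hypergraph})$. A further subtlety is that the permutation $\sigma$ in the induction hypothesis is not fixed a priori, but the existential quantifier in the conclusion lets us choose $\sigma'$ freely; we only need to check, via the construction of $\Delta(u,x/y/dir)$ and $C(u,x/y/dir,j)$, that the swap of positions $1$ and $j$ is correctly realised by Lemma~\ref{lemma:cdfg-tm-ci-step}, which is immediate from the attachment ordering of the $head$ and $\fuscomp{head}$ hyperedges in $C(u,x/y/dir,j)$.
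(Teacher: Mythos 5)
Your proposal is correct and follows essentially the same route as the paper: induction on $k$, with the base case built from Proposition~\ref{prop:cdfg-tape-graph-1} plus one application of $\fr(tape)$, and the inductive step realized by multiplying in the appropriate $C(u,x/y/dir,j)$ (chosen so that the $j$-th source matches $q_{k+1}$) together with $tape_{\tmtapestart}$, $tape_{\tmtapeend}$ and $\encsg{(\fuscomp{\tmblank})}$, then invoking Lemma~\ref{lemma:cdfg-tm-ci-step} while an untouched copy of $Z$ is carried along by the multiplication. The bookkeeping points you flag (preserving a full copy of $Z$ and the position-$1$/position-$j$ swap of the permutation) are exactly the details the paper's proof also relies on.
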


\begin{proof}
Induction base: $k=0$.
For each $w=w_1\cdots w_n \in \Omega^*$ exists a derivation in $\CDFG(\TM)$ such that
$hg(\TM)_{init} + w_{start} + w_{end} + \sum\limits_{i=1}^n w_{w_i} \dder^{n+2} H_0 = hg(q_{start}\sigma, \varepsilon, w, w)$
using $\fr(tape)$ once and $\fr(gen)$ $n +1$-times due to Proposition~\ref{prop:cdfg-tape-graph-1}.
Consequently,
$Z \dder_m Z + hg(\TM)_{init} + w_{start} + w_{end} + \sum\limits_{i=1}^n w_{w_i}$ $\dder^{n+2} Z + H_0.$

Induction step: $c_0 \tmstep{\TM}^{k+1} c_{k+1}$ implies $c_0 \tmstep{\TM}^{k} c_{k} \tmstep{\TM} c_{k+1}$ for some $c_k = (q_k, \alpha',\beta')$.
By induction hypothesis $c_0 \tmstep{\TM}^{k} c_{k}$ implies $Z \dder^* Z + H_k$,
where $H_k = hg(q_k \sigma', \alpha',\beta', w)$ for some $\sigma'$.
Let $c_{k} \tmstep{\TM} c_{k+1}$ be wrt $\delta = (q_k,x,y,dir,q_{k+1})$.
By construction of $\CDFG(\TM)$ there exists $C(u,x/y/dir, j) \in \calC(Z)$
for each $(p,x,y,dir,q) \in \Delta, u \in \fsym{\Gamma}, 1\le j\le |Q|$.
Consequently, such a connected component also exists for $\delta$ such that $pr(s(head), 1) = q_{k}$ and $pr(s(head), j) = q_{k+1}$.
Let $C_\delta$ be this suitable connected component.
Then there is a derivation $Z + H_k \dder_m Z + H_k + C_\delta + tape_{\tmtapestart} + tape_{\tmtapeend} + \encsg{(\fuscomp{\tmblank})} \dder^* Z + H_{k+1}$, where $m$ is a multiplication, $H' =  hg(q_{k+1}\sigma, \alpha, \beta, w)$ and $\sigma$ is the same as $\sigma'$ except that $q_{k+1}$ is replaced by $q_k$, due to Lemma~\ref{lemma:cdfg-tm-ci-step}.
%\qed
\end{proof}

\begin{definition}
  $hg(\TM,\sigma, \alpha, \beta,w )_{acc}$ is a connected component isomorphic to
  $hg(\TM,\sigma, \alpha, \beta,w )$ but the label of the $head$-hyperedge is $term$.
\end{definition}

\begin{proof} of Theorem~\ref{thm:cdfg-tm}.
  We show first $w \in L(\TM)$ implies $\sg(w) \in L(\CDFG(\TM))$.
Let $w \in L(\TM)$. Then
$c_0 = (q_{start}, \varepsilon, w) \tmstep{\TM}^* c_a = (q_{accept}, \alpha,\beta)$ for some $w \in \Omega^*, \alpha,\beta \in \Gamma^*$.
Then
\begin{align*}
  Z
  & \dder^* ~~Z~ + hg(q_{accept} \sigma, \alpha, \beta,w) ~~~\text{ due to Lemma~\ref{lemma:cdfg-tm-computation}} \\
  & \dder_m Acc + hg(q_{accept} \sigma, \alpha, \beta,w) ~~~\text{ where } m(x) = \begin{cases} 1 & x\in\{Acc,hg(q_{accept} \sigma, \alpha, \beta,w)\}\\ 0 & \text{otherwise} \end{cases}\\
  & \dder_{accept} hg(q_{accept}\sigma,\alpha, \beta,w)_{acc}\\
  & \dder_{cut} (hg(\TM,q_{accept} \sigma) + \encsg{(\alpha, \beta)})/_{begin(\sg(\beta)) \equiv \tE(head)} + \sg(w)_\mu.
 \end{align*}
Hence, $\sg(w) \in L(\CDFG(\TM))$.

\bigskip

The converse is more complicated to show. 
$\sg(w) \in L(\CDFG(\TM))$ means there is a derivation $Z \dder^* H$ with $Y \in \calC(H), H \in \Htm{\{\mu, term \}}{\Lambda + \Omega} -  \Ht{ \{term\} + \Lambda + \Omega}$ and $rem_\mu(Y) = \sg(w)$.
Without loss of generality, one can assume:
\begin{itemize}
\item $H = Y$ because the other connected components can be multiplied by 0.
\item There is exactly one marker component in each derived hypergraph because two marked derived hypergraphs can never be fused with each other.
\item The set of sources of the hypergraph representation of a Turing machine (and extended connected components) is $Q$ because source vertices of these connected components cannot be fused with each other.
\item All necessary multiplications are done as first derivation step and all applications of context-free fusion rules ($\fr(gen)$, $\fr(\tmtapestart), \fr(\tmtapeend)$ and $\fr(tape)$), are done before any application of some context-dependent fusion rule with context conditions.
\end{itemize}
Moreover, some of the rules are sequentially dependent with respect to the same connected component.
\begin{enumerate}
\item $cut$ and $accept$ are sequentially dependent, because the $cut$-hyperedge required in positive context condition of the $cut$-rule is added to $v_{tape}$ by the application of $accept$ to $hg(q_{accept} \sigma, \alpha, \beta,w) + Acc$.
\item $\Delta(u,\lambda)$ and $accept$ are sequentially dependent, because $q_{start} \ne q_{accept}$.

\item $\Delta(u,\lambda)$ and and some rule $r$ in the latter set in $P_{\Delta}$ are sequentially dependent, because the complementary edges attached to $v_{tape}$ required by $r$ are attached by the application of $\Delta(u,\lambda)$.

\item $\fr(gen), \fr(tape)$ and $\Delta(u,\lambda)$ are sequentially dependent, because only if the tape graph is attached to the $head$-hyperedge of some hypergraph derived from $hg(\TM)_{init}$, then the positive context conditions of $\Delta(u,\lambda)$ are satisfied.
\end{enumerate}
%\item %%% nicht notwendig
%The application of $\fr(gen)$ can be delayed such that it is only applied if the (reconstructed) $gen$-hyperedge is attached to $v_{tape}$ or the first source of the head-hyperedge is $q_{accept}$.
Furthermore, the positive and negative context conditions restrict the fusion process dramatically.
\begin{enumerate}

\item $cut$ is only applicable to $hg(q_{accept}\sigma,\alpha, \beta,w)_{acc}$ for arbitrary $\sigma,\alpha, \beta,w$.

\item No two $\Delta(u_1,\lambda_1)$ and $\Delta(u_2,\lambda_2)$ are applicable to some hypergraph representation of some configuration directly one after the other, because the application of $\Delta(u_1,x_1/y_1/dir_1)$ attaches
a $\fuscomp{u_1}$- (if $dir_1 \in \{l,r\}$) and a $\fuscomp{x_1}$-hyperedge to $v_{tape}$, hence, the negative context conditions of $\Delta(u_2,\lambda_2)$ are not satisfied. 

\item $\Delta(u,\lambda)$ may only be applicable to some $hg(q_1 \cdots q_{|Q|}, \alpha u, x\beta,w)$ and $C(u,\lambda,j)$, where $\lambda = x/y/dir$ and $q_1 \cdots q_{|Q|},\alpha, u, x,\beta',w$ are arbitrary.
No fusion is possible wrt $C(u_1,\lambda_1,j) + C(u_2,\lambda_2,k) + Acc$.

\item A rule of the latter set in $P_{\Delta}$ is only applicable to the connected component obtained by the fusion wrt $\Delta(u,\lambda)$. No fusion is possible inside $C(u,\lambda,k)$.

\end{enumerate}
The last argument is that the context-dependent fusion rule $accept$ can only be applied if there exists a match into some connected component derived from $hg(\TM)_{init}$ and $Acc$. The restriction to $hg(\TM)_{init}$ comes from the fact, that the $head$-hyperedge must not be part of some $C$ connected component.
Hence, $accept$ and $cut$ are only applicable to a hypergraph representation of a configuration wrt input $w$ if and only if $w \in L(\TM)$. Moreover, $accept$ and $cut$ can be delayed to the end of the derivation.

Let $C_1, \ldots, C_n$ be the $C$-components in the order in which they are used in the derivation $Z \dder^* \sg(w)_\mu$. Then, using the remarks above, one can rearrange the derivation such that is is of the form
\begin{align*}
  Z
  & \dder_m X_0 \dder^*_{\fr(gen)} X_1 \dder^*_{\fr(\tmtapestart)} X_2 \dder^*_{\fr(\tmtapeend)} X_3 \dder^*_{\fr(tape)} hg(\sigma_0, \alpha_0, \beta_0, w) + Acc + \sum\limits_{i=1}^n C_i \dder^* hg(\sigma_1, \alpha_1, \beta_1, w) + Acc + \sum\limits_{i=2}^n C_i\\
  &\dder^* \ldots \dder^* hg(\sigma_n, \alpha_n, \beta_n, w) + Acc \dder_{accept} Y_n \dder_{cut} Y'_n + \sg(w)_\mu \dder_{m_0} \sg(w)_\mu.
\end{align*} 
Consequently, due to Lemma~\ref{lemma:cdfg-tm-ci-step}\footnote{The cases where $\fr(\tmtapestart), \fr(\tmtapeend), shrink$ are applied do not occur due to the assumption that all context-free fusion rules are applied first.}, this implies $c_0 \tmstep{\TM}^* (q_{accept}, \alpha_n, \beta_n)$ wrt input $w$.
Hence,
$w \in L(\TM)$.
%\qed
\end{proof}

\section{Conclusion}
\label{sec:conclusion}
In this paper, we have continued the research on context-dependent fusion grammars by transforming Turing machines into this type of hypergraph grammars.
This reduction gives us interesting insights into these grammars because the transformation proves that context-dependent fusion grammars are another universal computing model and can generate all recursive enumerable string languages (up to representation).
Note that a similar construction also works for computation of partial functions. In this case the connected components $tape_{start}, tape_{end}$ and $tape_x$ are replaced by a tape graph representing the Turing machines input $x_1\ldots x_n \in \Sigma^*$, where the start is attached to some $tape$-hyperedge.
However, further research is needed including the following open question.
In the literature, one encounters model transformations from several modeling approaches into Turing machines.
Now they can be extended to context-dependent fusion grammars.
Does this provide interesting insights?
Are only positive or only negative context conditions powerful enough to cover Turing machines?
How does a natural transformation of context-dependent fusion grammars into splicing/fusion grammars or the other way round look like?

%\section{Appendix}
%\input{appendix}

\bibliographystyle{eptcs}
\bibliography{lit,litall}
\end{document}